\newcommand{\Card}[1]{|#1|}
\newcommand{\var}{\text{\normalfont var}}
\newcommand{\sat}{\text{\normalfont sat}}
\newcommand{\ol}[1]{\overline{#1}}
\newcommand{\Prob}{{\mathbb P}}
\newcommand{\Exp}{{\mathbb E}}
\let\epsilon=\varepsilon
\let\phi=\varphi
\newtheorem{lemma}{Lemma}
\newtheorem{corollary}{Corollary}
\newtheorem{theorem}{Theorem}
\newcommand{\ep}{\epsilon}
\newcommand{\MST}{\textsc{Max-$r$-Sat${}_{\text{tlb}}$}}
\newcommand{\MTT}{\textsc{Max-$2$-Sat${}_{\text{tlb}}$}}
\newcommand{\MLT}{\textsc{Max-$r$-Lin2${}_{\text{tlb}}$}}
\begin{document}

\date{}
\title{Solving MAX-$r$-SAT above a Tight Lower Bound%
  \footnote{%
    \textbf{Publication Information:} This is the author's self-archived
    copy of a paper that has been published in \emph{Algorithmica}
    61:638–655, DOI 10.1007/s00453-010-9428-7. The final publication is
    available at www.springerlink.com.
    A preliminary version of this paper has appeared in the proceedings
    of ACM-SIAM Symposium on Discrete Algorithms (SODA 2010).  We extend
    the preliminary version by introducing the notion of a
    bikernelization and using it to prove the existence of a polynomial
    kernel for the {\sc Max-$r$-CSP${}_{\text{tlb}}$} problem introduced
    in Section \ref{sec5}.  We also obtain a quadratic kernel for
    $\MST$.}}

\author{Noga Alon${}^1$,
Gregory Gutin${}^2$, Eun Jung Kim${}^2$, Stefan
Szeider${}^3$, and Anders Yeo${}^2$\\[6pt]
\small ${}^1$  Schools of Mathematics and Computer Science,\\[-3pt]
\small  Tel Aviv University, Tel Aviv 69978, Israel\\[-3pt]
\small \texttt{nogaa@post.tau.ac.il}\\
\small ${}^2$  Department of Computer Science,
\small  Royal Holloway, University of London\\[-3pt]
\small Egham, Surrey TW20 0EX, UK\\[-3pt]
\small \texttt{\{gutin|eunjung|anders\}@cs.rhul.ac.uk}\\
\small ${}^3$ Institute of Information Systems,
\small Vienna University of Technology\\[-3pt]
\small A-1040 Vienna, Austria\\[-3pt]
\small \texttt{stefan@szeider.net}
}

\maketitle

\thispagestyle{empty}

\newenvironment{compress}{\baselineskip=10pt}{\par}
\begin{abstract}
\noindent
  We present an exact algorithm that decides, for every fixed $r
  \geq 2$ in time $O(m) + 2^{O(k^2)}$ whether a given multiset of
  $m$ clauses of size~$r$ admits a truth assignment that satisfies
  at least $((2^r-1)m+k)/2^r$ clauses.  Thus \textsc{Max-$r$-Sat} is
  fixed-parameter tractable when parameterized by the number of satisfied
  clauses above the tight lower bound~$(1-2^{-r})m$.  This solves an open
  problem of Mahajan, Raman and Sikdar (J. Comput. System Sci., 75, 2009).

  Our algorithm is based on a polynomial-time data reduction procedure
  that reduces a problem instance to an equivalent algebraically
  represented problem with $O(9^rk^2)$ variables.
 This is done by representing the instance as an appropriate polynomial,
 and by applying a probabilistic argument combined with some simple tools
from Harmonic analysis to show that if the polynomial cannot be reduced
to one of size $O(9^rk^2)$, then there is a truth assignment satisfying
the required number of clauses.

  We introduce a new notion of bikernelization from a parameterized
problem to another one and apply it to
prove that the above-mentioned parameterized \textsc{Max-$r$-Sat}
  admits a polynomial-size kernel.

  Combining another probabilistic argument with tools from graph
  matching theory and signed graphs, we show that if an instance of
  \textsc{Max-$2$-Sat} with $m$ clauses has at least $3k$ variables
  after application of a certain polynomial time reduction rule to it,
  then there is a truth assignment that satisfies at least $(3m+k)/4$
  clauses.

  We also outline how the fixed-parameter tractability and polynomial-size kernel results on
  \textsc{Max-$r$-Sat} can be extended to more general families
of Boolean Constraint
  Satisfaction Problems.
\end{abstract}
\vfill

\section{Introduction}\label{section:intro}

The Maximum $r$-Satisfiability Problem (\textsc{Max-$r$-Sat}) is a
classic optimization problem with a wide range of real-world
applications. The task is to find a truth assignment to a multiset of
clauses, each with exactly $r$ literals, that satisfies as many clauses
as possible, or in the decision version of the problem, to satisfy at
least $t$ clauses where $t$ is given with the input. Even
\textsc{Max-2-Sat} is NP-hard~\cite{GareyJohnsonStockmeyer76} and APX-hard ~\cite{Hastad01}, in strong contrast with \textsc{2-Sat}
which is solvable in linear time~\cite{AspvallPlassTarjan79}.

It is always possible to satisfy a $1-2^{-r}$ fraction of a given multiset of
clauses with exactly $r$ literals each; a truth assignment that meets
this lower bound can be found in polynomial time by Johnson's
algorithm~\cite{Johnson73}. This lower bound is \emph{tight} in the
sense that it is optimal for an infinite sequence of instances. In this
paper we show that for every fixed $r$ we can decide in time $O(m) +
2^{O(k^2)}$ whether a given multiset of $m$ clauses admits a truth assignment
that satisfies at least $((2^r-1)m+k)/2^r$ clauses. Thus,
\textsc{Max-$r$-Sat} is fixed-parameter tractable when parameterized by
the number of satisfied clauses above the tight lower bound; this
answers a question posed by Mahajan, Raman and
Sikdar~\cite{MahajanRamanSikdar09}.

Our algorithm described in Section~\ref{sec3} is based on a
polynomial-time data reduction procedure that reduces a problem instance
to an equivalent algebraically represented problem with $O(k^2)$
variables. This is done by representing the instance as an appropriate
polynomial, and by applying a probabilistic argument combined with some
simple tools from Harmonic analysis to show that if the polynomial
cannot be reduced to one of size $O(k^2)$, then there is a truth
assignment satisfying the required number of clauses.  The basic
approach is based on the ideas  of \cite{AlonGutinKrivelevich04},
and a similar
one which, however, does not apply any algebraic reductions, was used in
\cite{GutinKimMnichYeo,GutinKimSzeiderYeo09a} to show the existence of
kernels of size $O(k^2)$ for other problems parameterized above tight lower
bounds.

We also show that the above-mentioned parameterized \textsc{Max-$r$-Sat}
admits a polynomial-size kernel. This can be deduced
from our fixed-parameter result
and a general lemma proved in Section \ref{sec:bik}, and can also
be proved by a more efficient, direct argument.
The lemma, which is interesting in its own right, links
a new concept that we call bikernelization with the well-known
concept of kernelization.
We believe that bikernelization, in general, and the lemma,
in particular, will have further
applications.

In Section~\ref{sec4}, combining another probabilistic argument with
tools from graph matching theory and signed graphs, we show that if an
instance $\cal I$ of \textsc{Max-$2$-Sat} on $m$ clauses has at least
$3k$ variables after application of a certain polynomial time reduction
rule to it, then there is a truth assignment for $\cal I$ that
satisfies at least $(3m+k)/4$ clauses.  Thus, \textsc{Max-$2$-Sat}
admits a problem kernel with at most $3k-1$ variables.

In Section~\ref{sec5}, we outline
how the fixed-parameter tractability and polynomial-size
kernel results on \textsc{Max-$r$-Sat} can
be extended to more general families of
Boolean Constraint Satisfaction Problems. 

In Section \ref{sec:d}, we have a short discussion of the practicality of our results 
and mention a very recent improvement of our kernel result for 
\textsc{Max-$r$-Sat} parameterized above the tight lower bound.

\paragraph{Related Work} Parameterizations \emph{above a guaranteed value} were first considered
by Mahajan and Raman \cite{MahajanRaman99} for the problems
\textsc{Max-Sat} and \textsc{Max-Cut}. They devised an algorithm for
\textsc{Max-Sat} with running time $O^*(1.618^k+\sum_{i=1}^m|C_i|)$ that
finds, for a multiset $\{C_1,\ldots ,C_m\}$ of $m$ clauses, a truth assignment
satisfying at least $\lceil m/2 \rceil +k$ clauses, or decides that no
such truth assignment exists ($|C_i|$ denotes the number of literals in
$C_i$).  In a recent paper~\cite{MahajanRamanSikdar09}, Mahajan, Raman
and Sikdar argue, in detail, that a practical (and challenging) parameter for a
maximization problem is the number of clauses satisfied above a
\emph{tight} lower bound, which is $(1-2^{-r})m$ for \textsc{Max-Sat} if
each clause contains exactly $r$ different variables. Only a few
non-trivial complexity results are known for problems parameterized above a tight
lower
bound~\cite{GutinKimSzeiderYeo09a,GutinRafieySzeiderYeo07,GutinSzeiderYeo08,HeggernesPaulTelleVillanger07,MahajanRaman99}.

Mahajan et al.~\cite{MahajanRamanSikdar09} state several problems
parameterized above a tight lower bound whose parameterized complexity
is open. One of the problems is the (exact) \textsc{Max-$r$-Sat} problem
(an instance consists of $m$ clauses, each containing exactly $r$
different literals) parameterized by the number of satisfied
clauses above the tight lower bound
$(1-2^{-r})m$, i.e., {\MST}.

\section{Preliminaries}

We assume an infinite supply of propositional \emph{variables}. A
\emph{literal} is a variable $x$ or its negation $\ol{x}$.  A
\emph{clause} is a finite set of literals not containing a complementary
pair $x$ and~$\ol{x}$.  A clause is of \emph{size} $r$ if it contains
exactly $r$ literals. For simplicity of presentation, we will denote a
clause by a sequence of its literals. For example, the clause $\{\ol{x},
y\}$ will be denoted $\ol{x}y$ or equivalently $y\ol{x}$.  A \emph{CNF
  formula} $F$ is a finite multiset of clauses (a clause may appear in
the multiset several times).  A variable $x$ \emph{occurs} in a clause
if the clause contains $x$ or $\ol{x}$, and $x$ occurs in a CNF formula
$F$ if it occurs in some clause of~$F$. Let $\var(C)$ and $\var(F)$
denote the sets of variables occurring in $C$ and $F$, respectively.  A
CNF formula is an $r$-CNF formula if $\Card{C}=r$ for all $C\in F$.
Thus we require that each clause of a $r$-CNF formula contains
\emph{exactly}~$r$ different literals (some authors use for that the
term \emph{exact} $r$-CNF).
A \emph{truth assignment} is a mapping $\tau:\ V\rightarrow \{-1,1\}$
defined on some set $V$ of variables. In order to obtain a 'normalized' algebraic representation, we use $\{-1,1\}$ instead of the usual $\{0,1\}$ binary symbols. We write $2^V$ to denote the set
of all truth assignments on $V$.  A truth
assignment $\tau$ \emph{satisfies} a clause $C$ if there is some
variable $x\in C$ with $\tau(x)=1$ or a negated variable $\ol{x}\in C$
with $\tau(x)=-1$.  We write $\sat(\tau,F)$ for the number of clauses
of~$F$ that are satisfied by $\tau$, and we write
\[
\sat(F)=\max_{\tau\in 2^{\var(F)}}\sat(\tau,F).
\]
%Similarly, $\fal(\tau,F)$ denotes the numbers of falsified clauses
%in $F$.

%Apart from CNF formulas we use DNF formulas. A
%truth assignment $\tau$ \emph{satisfies} a clause $C$ of a DNF
%formula if for every variable $x\in C$ we have $\tau(x)=1$ and for
%every negated variable $\ol{x}\in C$ we have $\tau(x)=-1$.

A \emph{parameterized problem} is a subset $L\subseteq \Sigma^* \times
\mathbb{N}$ over a finite alphabet $\Sigma$. $L$ is
\emph{fixed-parameter tractable} if the membership of an instance
$(x,k)$ in $\Sigma^* \times \mathbb{N}$ can be decided in time
$|x|^{O(1)} \cdot f(k)$ where $f$ is a computable function of the
parameter~\cite{DowneyFellows99,FlumGrohe06,Niedermeier06}.

If the nonparameterized version of $L$ is NP-hard, then $f(k)$ is superpolynomial provided  P$\neq$NP. Often $f(k)$ is `moderately' exponential, which makes the problem practically solvable for small values of $k$. Thus, it is important to parameterize a problem in such a way that the instances with small values of $k$ are of interest.

Given a parameterized problem $L$,
a \emph{kernelization of $L$} is a polynomial-time
algorithm that maps an instance $(x,k)$ to an instance $(x',k')$ (the
\emph{kernel}) such that (i)~$(x,k)\in L$ if and only if
$(x',k')\in L$, (ii)~ $k'\leq f(k)$, and (iii)~$|x'|\leq g(k)$ for some
functions $f$ and $g$. The function $g(k)$ is called the {\em size} of the kernel.
The notion of a kernelization was introduced by Downey and Fellows \cite{DowneyFellows95}. They showed that
a decidable parameterized problem is fixed-parameter
tractable if and only if it admits a
kernelization. Recently, Bodlaender et al. \cite{BDFH} obtained a
framework to give evidence that fixed-parameter tractable problems
do not have a kernel of polynomial size. For excellent overviews
of much recent work on kernelization, see \cite{Bodlaender09,GN}.

We shall consider the following parameterized version of
\textsc{Max-$r$-Sat}.

\begin{center}
\fbox{~\begin{minipage}{11cm}
  \textsc{Max-$r$-Sat above Tight Lower Bound} (or {\MST} for short)

  \smallskip

  \emph{Instance:} A pair $(F,k)$ where $F$ is a multiset of $m$
  clauses of size $r$ and $k$ is a nonnegative integer.

  \smallskip

  \emph{Parameter:} The integer $k$.

\smallskip

  \emph{Question:} Is $\sat(F)\geq ((2^r-1)m+k)/2^r$?

\smallskip
\end{minipage}~}
\end{center}

\noindent We note that Mahajan et al.~\cite{MahajanRamanSikdar09} use a
slightly different formulation of the problem, asking for an assignment
that satisfies at least $(1-2^{-r})m+k$ clauses; since $r$ is fixed,
this change does not affect the complexity of the problem.

We will also refer to the following special case
of another problem introduced by Mahajan et al.~\cite{MahajanRamanSikdar09}.

\begin{center}
\fbox{~\begin{minipage}{11cm}
  \textsc{Max-$r$-Lin2 above Tight Lower Bound} (or {\MLT} for short)

  \smallskip

  \emph{Instance:} A system of $m$ linear equations
$e_1,\dots, e_m$ in $n$ variables over $\mathbb{F}_2$, where
no equation has more than $r$ variables,
and each equation $e_j$ has a positive integral weight $w_j$,
and a nonnegative integer $k$.

  \smallskip

  \emph{Parameter:} The integer $k$.

\smallskip

  \emph{Question:} Is there an assignment of values to the $n$ variables such that the total weight of the satisfied
equations is at least $(W+k)/2$, where  $W=w_1+\cdots +w_m$ ?

\smallskip
\end{minipage}~}
\end{center}

Note that trivially $W/2$ is indeed a tight lower bound
for the above problem, as the expected number of satisfied
equations in a random assignment is $W/2$, and if the equations
come in identical pairs with contradicting right-hand sides, no
assignment satisfies more equations.
%\cite{HastadVenkatesh02}
It was proved in \cite{GutinKimSzeiderYeo09a}
that ${\MLT}$ admits a kernel with
$O(k^2)$ equations and variables.

We conclude the section by outlining the very basic principles of the probabilistic method which will be implicitly used in this paper. Given random variables $X_1, \ldots , X_n$, the fundamental property known as {\em linearity of expectation} states that $\Exp(X_1+\ldots +X_n)=\Exp(X_1)+ \ldots + \Exp(X_n)$. The {\em averaging argument} utilizes the fact that there is a point for which $X\geq \Exp(X)$ and a point for which $X\leq \Exp(X)$ in the probability space. Lastly, a positive probability $\Prob(A)>0$ for some event $A$ means that there is at least one point in the probability space which belongs to $A$. For example, $\Prob(X\geq k)>0$ tells us that there exists a point for which $X\geq k$. For further reading on the probabilistic method, we refer the reader to the textbook \cite{AS} by Alon and Spencer.

\section{Bikernelization}\label{sec:bik}

In this section we introduce a new notion called bikernelization and study its basic properties.
A bikernelization from $L$ to $L'$ is of interest especially when $L'$ is a well-studied problem.

Given a pair $L,L'$ of parameterized problems,
a \emph{bikernelization from $L$ to $L'$} is a polynomial-time
algorithm that maps an instance $(x,k)$ to an instance $(x',k')$ (the
\emph{bikernel}) such that (i)~$(x,k)\in L$ if and only if
$(x',k')\in L'$, (ii)~ $k'\leq f(k)$, and (iii)~$|x'|\leq g(k)$ for some
functions $f$ and $g$. The function $g(k)$ is called the {\em size} of the bikernel.
Observe that a {\em kernelization} of a parameterized problem
$L$ is simply a bikernelization from $L$ to itself, i.e., a bikerenelization generalizes a kernelization.

Recall that a decidable parameterized problem is fixed-parameter
tractable if and only if it admits a
kernelization. This result can be extended as follows:
A decidable parameterized problem $L$ is fixed-parameter
tractable if and only if it admits a
bikernelization from itself to a decidable parameterized problem $L'$.
Indeed, if $L$ is fixed-parameter
tractable, then $L$ is decidable and admits a
bikernelization to itself. If $L$ is decidable and admits a
bikernelization from itself to a parameterized problem $L'$,
then $(x,k)$ can be decided by first mapping it to $(x',k')$ in polynomial
time and then deciding $(x',k')$ in time depending only on $k'$, and
thus only on $k$.

We are especially interested in cases when kernels are of polynomial
size. The next lemma is similar to Theorem 3 in \cite{BTY09}.

\begin{lemma}\label{lem:pk}
Let $L,L'$ be a pair of decidable parameterized problems such that $L'$
is in NP, and $L$ is NP-complete.
If there is a bikernelization from $L$ to $L'$ producing a
bikernel of polynomial size, then $L$ has a polynomial-size kernel.
\end{lemma}
\begin{proof}
Consider a bikernelization from $L$ to $L'$ that maps an instance
$(x,k)\in L$ to an instance $(x',k')\in L'$ with $k'\le f(k).$
Since $L'$ is in NP and $L$ is NP-complete,
there exists a polynomial time reduction from
$L'$ to $L$. Thus, we can find in polynomial time
an instance $(x'',k'')$ of $L$ which is decision-equivalent with
$(x',k')$, and in turn with $(x,k)$. Observe
that $|x''|\le |x'|^{O(1)}\le k^{O(1)}$ and
$k''\le (k')^{O(1)}+(|x'|)^{O(1)} \le f(k)^{O(1)}+k^{O(1)}.$
Thus, $(x'',k'')$ is a kernel of $L$ of polynomial size.
\end{proof}

\section{MAX-$r$-SAT}\label{sec3}

In this section we describe a polynomial-time data reduction that reduces an instance of $\MST$ into an equivalent algebraically represented problem. The equivalent algebraically represented problems is `normalized' in a sense, which enables us to obtain a bound on the size of a given instance. Some results from probability theory and Harmonic analysis in boolean functions play a central role in proving such a bound. As a result, we prove that $\MST$ is fixed-parameter tractable and in particular we present a quadratic kernel using the notion of bikernelization introduced in the previous section.

\subsection{An Algebraic Representation}\label{algerep}

Let $F$ be an $r$-CNF formula with clauses $C_1,\ldots ,C_m$
in the variables $x_1, x_2, \ldots ,x_n$.

For $F$, consider
\[
X=\sum_{C\in F}[1-\prod_{x_i\in \var(C)}(1+\epsilon_ix_i)],
\]
where $\epsilon_i\in \{-1,1\}$ and $\epsilon_i=-1$ if and only if $x_i$
is in~$C$.

\begin{lemma}\label{lem0}
  For a truth assignment $\tau$, we have
  $X=2^r(\sat(\tau,F)-(1-2^{-r})m)$.
\end{lemma}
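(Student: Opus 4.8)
The plan is to evaluate $X$ at the point $x_i = \tau(x_i)$ and to show that each clause contributes exactly $1$ to the sum when it is satisfied by $\tau$ and exactly $1-2^r$ when it is falsified; summing over the $m$ clauses and rearranging then yields the claimed identity in one line.

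First I would fix a clause $C$ and examine a single factor $1+\epsilon_i x_i$ with $x_i$ set to $\tau(x_i)$, for a variable $x_i\in\var(C)$. Since $\epsilon_i,\tau(x_i)\in\{-1,1\}$, the product $\epsilon_i\tau(x_i)$ is $\pm 1$, so the factor is either $0$ or $2$, and it vanishes precisely when $\epsilon_i\tau(x_i)=-1$. The key is to match this vanishing with satisfaction of the corresponding literal. If the positive literal $x_i$ occurs in $C$ then by definition $\epsilon_i=-1$, so the factor equals $1-\tau(x_i)$, which is $0$ exactly when $\tau(x_i)=1$, i.e.\ exactly when that literal is satisfied; if instead $\ol{x_i}$ occurs in $C$ then $\epsilon_i=1$, the factor equals $1+\tau(x_i)$, and it is $0$ exactly when $\tau(x_i)=-1$, again exactly when the literal is satisfied. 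In either case the factor is $0$ iff the literal of $x_i$ in $C$ is satisfied and equals $2$ otherwise.

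The per-clause contribution follows at once. If $\tau$ satisfies $C$, some factor vanishes, so $\prod_{x_i\in\var(C)}(1+\epsilon_i\tau(x_i))=0$ and the bracketed term equals $1$; if $\tau$ falsifies $C$, all $r$ literals are unsatisfied, so all $r$ factors equal $2$, the product equals $2^r$, and the bracketed term equals $1-2^r$. Writing $s=\sat(\tau,F)$, I would then sum to obtain
\[
X = s\cdot 1 + (m-s)(1-2^r) = m - 2^r(m-s) = 2^r\bigl(s - m + 2^{-r}m\bigr),
\]
which is exactly $2^r(\sat(\tau,F)-(1-2^{-r})m)$. There is no genuine obstacle here, as the argument is a direct evaluation; the only point requiring care is the sign bookkeeping in the definition of $\epsilon_i$ — one must keep straight that $\epsilon_i=-1$ corresponds to the \emph{positive} literal appearing in $C$ — together with the bridging observation that a factor vanishes exactly when its associated literal is satisfied, so that an all-$2$ product (value $2^r$) corresponds to a falsified clause.
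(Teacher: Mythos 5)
Your proof is correct and follows essentially the same route as the paper's: both verify that $\prod_{x_i\in\var(C)}(1+\epsilon_i\tau(x_i))$ equals $2^r$ when $C$ is falsified and $0$ otherwise, and then compute $X=m-2^r(m-\sat(\tau,F))$. You merely spell out the per-factor sign bookkeeping that the paper leaves as an observation.
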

\begin{proof}
  Observe that $\prod_{x_i\in \var(C)}(1+\epsilon_ix_i)$ equals $2^r$ if
  $C$ is falsified and 0, otherwise. Thus, $X=m-2^r(m-\sat(\tau,F))$
  implying the claimed formula.
\end{proof}

After algebraic simplification $X=X(x_1, x_2, \ldots ,x_n)$
can be written as
$\label{Xeq}X=\sum_{I\in {\cal S}}X_I,$ where $X_I=c_I\prod_{i\in
I}x_i$, each $c_I$ is a nonzero integer and $\cal S$ is a family of
nonempty subsets of $\{1,\ldots ,n\}$ each with at most $r$
elements.

The question we address is that of deciding whether or not there are
values $x_i \in \{-1,1\}$ so that $X=X(x_1,x_2, \ldots ,x_n) \geq
k$. The idea is to use a probabilistic argument and show that if the
above polynomial has many nonzero coefficients, that is, if $|{\cal S}|$
is large, this is necessarily the case, whereas if it is small, the
question can be solved by checking all possibilities of the relevant
variables.

\subsection{The Properties of $X$}

In what follows, we assume that each variable $x_i$ takes its values
uniformly at random and independently in $\{-1,1\}$ and thus $X$ is a random
variable. Our approach is similar to the one
in~\cite{AlonGutinKrivelevich04}. For completeness, we reproduce part of
the argument (modifying it a bit and slightly improving the constant for
the case considered here).  We need the following simple lemma.

\begin{lemma}[\normalfont see, e.g., \cite{AlonGutinKrivelevich04}, Lemma 3.1]
  \label{l91}
  For every real random variable $X$ with finite and positive fourth
  moment,
  \[
  \Exp(|X|) \geq \frac{\Exp(X^2)^{3/2}}{\Exp(X^4)^{1/2}}.
  \]
\end{lemma}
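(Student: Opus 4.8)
The plan is to prove the inequality $\Exp(|X|) \geq \Exp(X^2)^{3/2}/\Exp(X^4)^{1/2}$ via a clever application of Hölder's inequality. The key observation is that I want to relate three moments: $\Exp(|X|)$, $\Exp(X^2)$, and $\Exp(X^4)$. The natural tool is Hölder, which lets me bound one moment in terms of a product of two others with conjugate exponents.

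Let me think about how to set this up. I want to show $\Exp(X^2) \leq \Exp(|X|)^{2/3} \Exp(X^4)^{1/3}$, because rearranging that gives exactly the claimed bound: cubing yields $\Exp(X^2)^3 \leq \Exp(|X|)^2 \Exp(X^4)$, i.e. $\Exp(|X|)^2 \geq \Exp(X^2)^3 / \Exp(X^4)$, and taking square roots gives the lemma. So the real target is the inequality $\Exp(X^2) \leq \Exp(|X|)^{2/3}\Exp(X^4)^{1/3}$.

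To obtain this, I would write $X^2 = |X|^{2/3} \cdot |X|^{4/3}$ and apply Hölder's inequality to the two factors $|X|^{2/3}$ and $|X|^{4/3}$ with conjugate exponents $p = 3$ and $q = 3/2$ (noting $1/p + 1/q = 1/3 + 2/3 = 1$). Then
\[
\Exp(X^2) = \Exp\bigl(|X|^{2/3}\cdot |X|^{4/3}\bigr) \leq \Exp\bigl(|X|^{2}\bigr)^{1/3}\Exp\bigl(|X|^{2}\bigr)^{2/3},
\]
but I must be careful to track exponents correctly: applying Hölder gives $\Exp(|X|^{2/3}\cdot|X|^{4/3}) \leq (\Exp((|X|^{2/3})^3))^{1/3}(\Exp((|X|^{4/3})^{3/2}))^{2/3} = \Exp(|X|^2)^{1/3}\Exp(|X|^2)^{2/3}$, which collapses uselessly. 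The correct split is to balance so the higher power lands on $X^4$. I would instead set $X^2 = |X|^{\alpha}\cdot |X|^{\beta}$ with $\alpha+\beta = 2$ and choose exponents so that $|X|^{\alpha}$ raised to its Hölder power gives $|X|^1$ and $|X|^{\beta}$ raised to its power gives $|X|^4$; solving $\alpha p = 1$, $\beta q = 4$, $1/p+1/q=1$ yields $\alpha = 2/3$, $\beta = 4/3$, $p=3/2$, $q=3$, giving $\Exp(X^2)\leq \Exp(|X|)^{2/3}\Exp(X^4)^{1/3}$, as desired.

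The main obstacle is purely bookkeeping: choosing the exponents in the Hölder split so that the resulting factors are exactly $\Exp(|X|)$ and $\Exp(X^4)$ rather than some other combination of moments, and verifying the conjugacy condition $1/p+1/q=1$ holds for the chosen $p,q$. The positivity of the fourth moment (hence of all the quantities involved) guarantees all exponentiations and the final division are well-defined, so no degenerate cases arise. Once the exponents are pinned down, the rest is a one-line application of Hölder followed by algebraic rearrangement.
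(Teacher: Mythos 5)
Your proof is correct and is essentially the argument behind the cited source: the paper itself gives no proof of this lemma, deferring to Lemma~3.1 of \cite{AlonGutinKrivelevich04}, whose proof is exactly the H\"older split $\Exp(X^2)=\Exp(|X|^{2/3}\cdot|X|^{4/3})\le \Exp(|X|)^{2/3}\Exp(X^4)^{1/3}$ with conjugate exponents $3/2$ and $3$, followed by the same rearrangement. Your exponent bookkeeping in the final version checks out, so nothing further is needed.
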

The above lemma implies the following (see
\cite{AlonGutinKrivelevich04}, Lemma 3.2, part (ii) for a similar
result).
\begin{corollary}
\label{c93}
Let $X$ be a real random variable and suppose that its first, second and
fourth moments satisfy $\Exp(X)=0$, $\Exp(X^2)=\sigma^2>0$ and $\Exp(X^4)
\leq b \sigma^4$, where $b$ is a positive constant. Then
\[
\Prob(X \geq \frac{\sigma}{2 \sqrt b})>0.
\]
\end{corollary}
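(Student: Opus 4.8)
The plan is to prove Corollary~\ref{c93} by combining the lower bound on $\Exp(|X|)$ from Lemma~\ref{l91} with a simple averaging/splitting argument, using a proof by contradiction. Concretely, I want to show that $X$ exceeds $\sigma/(2\sqrt{b})$ with positive probability, so I assume for contradiction that $X < \sigma/(2\sqrt{b})$ almost surely (equivalently, that $\Prob(X \geq \sigma/(2\sqrt{b})) = 0$) and derive a contradiction.

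First I would record the two ingredients. From the hypothesis $\Exp(X^4) \leq b\sigma^4$ together with $\Exp(X^2) = \sigma^2$, Lemma~\ref{l91} gives
\[
\Exp(|X|) \geq \frac{(\sigma^2)^{3/2}}{(b\sigma^4)^{1/2}} = \frac{\sigma^3}{\sqrt{b}\,\sigma^2} = \frac{\sigma}{\sqrt{b}}.
\]
So the first moment of $|X|$ is at least $\sigma/\sqrt{b}$. Next I would use $\Exp(X) = 0$, which lets me split $\Exp(|X|)$ into its positive and negative parts: writing $X^+ = \max(X,0)$ and $X^- = \max(-X,0)$, we have $\Exp(|X|) = \Exp(X^+) + \Exp(X^-)$ and $\Exp(X) = \Exp(X^+) - \Exp(X^-) = 0$, hence $\Exp(X^+) = \Exp(X^-) = \tfrac{1}{2}\Exp(|X|) \geq \sigma/(2\sqrt{b})$.

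Now I would argue on the positive part. Under the contradiction hypothesis $X \leq \sigma/(2\sqrt{b})$ everywhere (on the support), the random variable $X^+$ is bounded above by the constant $\sigma/(2\sqrt{b})$, so $\Exp(X^+) \leq \sigma/(2\sqrt{b})$. Combined with the lower bound $\Exp(X^+) \geq \sigma/(2\sqrt{b})$ just derived, this forces equality. To turn equality into a genuine contradiction, I would observe that equality in $\Exp(X^+) = \sigma/(2\sqrt{b})$ when $X^+ \leq \sigma/(2\sqrt{b})$ pointwise requires $X$ to be concentrated at the extreme value, which is incompatible with $\Exp(X)=0$ and $\sigma > 0$; the cleanest route is to choose the threshold strictly inside the bound from Lemma~\ref{l91}, i.e.\ exploit the strict inequality in the conclusion. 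Since the displayed bound asks for $\Prob(X \geq \sigma/(2\sqrt b)) > 0$ with a strict inequality in the event's threshold relative to $\Exp(X^+) \geq \sigma/(2\sqrt b)$, the slack makes the contradiction strict.

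The main obstacle I anticipate is handling the boundary case carefully: the lower bound $\Exp(X^+) \geq \sigma/(2\sqrt b)$ matches the threshold $\sigma/(2\sqrt b)$ exactly, so I must make sure the event in the conclusion is $\{X \geq \sigma/(2\sqrt b)\}$ rather than a strict one, and verify that $\Exp(X^+) \geq \sigma/(2\sqrt b) > 0$ indeed forces $X$ to attain values at least $\sigma/(2\sqrt b)$ with positive probability. The key point is that a nonnegative random variable with positive expectation cannot be almost surely zero and, more to the point, cannot be almost surely \emph{strictly below} any value that its mean already meets --- otherwise its expectation would be strictly smaller. Making this last implication airtight (that $\Exp(X^+) \geq c > 0$ with the conclusion $\Prob(X \geq c)>0$ holding for the same constant $c$) is the only delicate step, and it follows because if $\Prob(X \geq c) = 0$ then $X^+ < c$ almost surely, giving $\Exp(X^+) < c$, a contradiction.
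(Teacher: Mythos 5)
Your proposal is correct and follows essentially the same route as the paper: apply Lemma~\ref{l91} to get $\Exp(|X|)\ge\sigma/\sqrt{b}$, use $\Exp(X)=0$ to conclude that the positive part satisfies $\Exp(X^+)\ge\sigma/(2\sqrt{b})$ (the paper writes this as $\Prob(X>0)\,\Exp(X\mid X>0)\ge\sigma/(2\sqrt b)$), and deduce that $X\ge\sigma/(2\sqrt b)$ with positive probability. Your final paragraph, noting that $\Prob(X\ge c)=0$ would force $X^+<c$ almost surely and hence $\Exp(X^+)<c$, just makes explicit the last step that the paper states without elaboration.
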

\begin{proof}
  By Lemma~\ref{l91}, $\Exp(|X|) \geq \frac{\sigma}{\sqrt b}$. Since
  $\Exp(X)=0$ it follows that
  \begin{equation}
    \label{e91}
    \Prob(X>0)\cdot \Exp(X| X >0) \geq
    \frac{\sigma}{2 \sqrt b}.
\end{equation}
Therefore, $X$ must be at least $\sigma/(2 \sqrt b)$ with positive
probability.
\end{proof}

\noindent We also use the hypercontractive inequality, see ~\cite{ODonnell08}. The following lemma states a special case of it first proved by Bonami \cite{Bo70}.

\begin{lemma}[Hypercontractive Inequality]  \label{l92}
  Let $f=f(x_1,\ldots,x_n)$ be a multilinear polynomial of degree (at most) $r$ in $n$
  variables $x_1,\ldots,x_n$ with domain $\{-1,1\}$.  Define a random
  variable $X$ by choosing a vector $(\ep_1,\ldots,\ep_n)\in \{-1,1\}^n$
  uniformly at random and setting $X=f(\ep_1,\ldots,\ep_n)$. Then,
  $\Exp(X^4 ) \leq 9^r (\Exp(X^2))^{2}$.

\end{lemma}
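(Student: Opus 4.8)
The plan is to deduce this from the standard $(2,4)$-hypercontractive inequality on the Boolean cube, which in fact yields the sharper constant $9^r\le 2^{6r}$. Expanding $f=\sum_{|S|\le r}\hat f(S)\chi_S$ in the Walsh basis $\chi_S=\prod_{i\in S}x_i$, Parseval's identity gives $\Exp(X^2)=\sum_S\hat f(S)^2=\|f\|_2^2$ and $\Exp(X^4)=\|f\|_4^4$, so the assertion is equivalent to $\|f\|_4\le 3^{r/2}\|f\|_2$ for every polynomial of degree at most $r$. I would obtain this from the operator inequality $\|T_\rho g\|_4\le\|g\|_2$ with $\rho=1/\sqrt3$, where $T_\rho$ is the noise operator acting by $T_\rho\chi_S=\rho^{|S|}\chi_S$. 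Applying it to $g=\sum_S 3^{|S|/2}\hat f(S)\chi_S$ gives $T_\rho g=f$, since $\rho^{|S|}3^{|S|/2}=1$, while $\|g\|_2^2=\sum_S 3^{|S|}\hat f(S)^2\le 3^r\|f\|_2^2$ because $|S|\le r$; together these yield precisely $\|f\|_4=\|T_\rho g\|_4\le\|g\|_2\le 3^{r/2}\|f\|_2$.

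The core of the argument is thus the operator bound, which I would prove by induction on $n$. The base case $n=1$ is the two-point inequality: for reals $a,b$ and a single uniform sign $x$ one computes $\Exp_x(a+\rho bx)^4=a^4+6\rho^2a^2b^2+\rho^4b^4$, and with $\rho^2=1/3$ this equals $a^4+2a^2b^2+\tfrac{1}{9}b^4\le(a^2+b^2)^2=(\Exp_x(a+bx)^2)^2$, the inequality holding since $\tfrac{1}{9}\le 1$. For the inductive step I would write $f=g+x_n h$ with $g,h$ depending only on $x_1,\dots,x_{n-1}$, so that $T_\rho f=T'_\rho g+\rho x_n T'_\rho h$, where $T'_\rho$ is the noise operator on the first $n-1$ coordinates. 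Taking the expectation over $x_n$ first and applying the two-point inequality pointwise (with $a=T'_\rho g$, $b=T'_\rho h$ for each fixed value of $x'=(x_1,\dots,x_{n-1})$) gives $\|T_\rho f\|_4^4\le\Exp_{x'}\bigl[\,((T'_\rho g)^2+(T'_\rho h)^2)^2\,\bigr]=\|(T'_\rho g)^2+(T'_\rho h)^2\|_2^2$.

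The remaining and most delicate step is to close the induction. Here I would invoke Minkowski's inequality for the $L^2$-norm over $x'$, namely $\|(T'_\rho g)^2+(T'_\rho h)^2\|_2\le\|(T'_\rho g)^2\|_2+\|(T'_\rho h)^2\|_2=\|T'_\rho g\|_4^2+\|T'_\rho h\|_4^2$, and then apply the induction hypothesis to the two $(n-1)$-variable functions to bound this by $\|g\|_2^2+\|h\|_2^2$. Since $\Exp[x_n]=0$ and $g,h$ do not involve $x_n$, the cross term vanishes and $\|f\|_2^2=\|g\|_2^2+\|h\|_2^2$, so combining the estimates gives $\|T_\rho f\|_4^4\le(\|g\|_2^2+\|h\|_2^2)^2=\|f\|_2^4$, which completes the induction. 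I expect the main obstacle to be exactly this interplay in the inductive step: arranging the Minkowski bound so that squaring the $L^4$-norms converts them into $L^2$-norms matching the induction hypothesis, and checking that the cross term in $\|f\|_2^2$ drops out. By contrast, the two-point computation and the reduction to hypercontractivity are routine.
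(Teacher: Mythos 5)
Your proof is correct. Note first that the paper itself gives no proof of this lemma: it is quoted from Bourgain's 1980 paper and used as a black box, so there is nothing in the source to compare against step by step. What you supply is the standard self-contained argument via $(2,4)$-hypercontractivity (the Bonami--Beckner two-point inequality plus induction on the number of coordinates), and every step checks out: the reduction $\|f\|_4=\|T_\rho g\|_4\le\|g\|_2\le 3^{r/2}\|f\|_2$ with $g=\sum_S 3^{|S|/2}\hat f(S)\chi_S$ is exactly right (the degree hypothesis enters only through $3^{|S|}\le 3^r$); the base case computation $a^4+6\rho^2a^2b^2+\rho^4b^4\le(a^2+b^2)^2$ at $\rho^2=1/3$ is correct; and the inductive step correctly combines the pointwise two-point bound with Minkowski's inequality in $L^2$ over $x'$, the identity $\|(T'_\rho g)^2\|_2=\|T'_\rho g\|_4^2$, and the orthogonal decomposition $\|f\|_2^2=\|g\|_2^2+\|h\|_2^2$. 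Your route in fact buys something: it yields $\Exp(X^4)\le 9^r(\Exp(X^2))^2$, strictly sharper than the $2^{6r}=64^r$ the paper imports from Bourgain, which would marginally improve the constants in Theorem~\ref{t95} (the threshold $4\cdot 8^{2r}k^2$ on the number of nonzero terms). The one presentational caveat is attribution: what you prove is usually credited to Bonami (and Gross/Beckner) rather than derived from Bourgain's specific result, so if this were inserted into the paper the citation would need adjusting.
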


\noindent Returning to the \hbox{random variable $X\!=\!X(x_1,x_2,
  \ldots ,x_n)$} defined in the previous subsection, we prove the
following.

\begin{lemma}
  \label{l94}
  Let $X= \sum_{I\in {\cal S}}X_I,$ where $X_I=c_I\prod_{i\in I}x_i$ is
  as in the previous subsection, and assume it is not identically
  zero. Then $\Exp(X)=0$, $\Exp(X^2)=\sum_{I\in {\cal S}}c^2_I \geq
  |{\cal S}| >0$ and $\Exp(X^4)\le 9^r\Exp(X^2)^2$.
\end{lemma}
\begin{proof}
  Since the $x_i$'s are mutually independent, $\Exp(X)=0$. Note that for $I,
  J\in {\cal S}$, $I\neq J$, we have
  $\Exp(X_IX_J)=c_Ic_J\Exp(\prod_{i\in I\mathrel{\Delta} J}x_i)=0,$ where
  $I\mathrel{\Delta} J$ is the symmetric difference of $I$ and~$J$. Thus,
  $\Exp(X^2)=\sum_{I\in {\cal S}}c^2_I$. By Lemma~\ref{l92},
  $\Exp(X^4)\le 9^r\Exp(X^2)^2$.
\end{proof}

\subsection{The Main Result for General  $r$}
\begin{theorem}
  \label{t95}
  The problem $\MST$ is fixed-parameter tractable and can be solved in
  time $O(m) + 2^{O(k^2)}$. Moreover, there exist
(i) a polynomial-size bikernel from {\MST} to {\MLT},
  and (ii) a polynomial-size kernel of {\MST}. In fact, there are
such a bikernel and a kernel of size $O(k^2)$.
\end{theorem}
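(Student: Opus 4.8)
The plan is to make the probabilistic argument of Section~\ref{sec3} quantitative, thresholding on $\sum_{I\in\mathcal S}c_I^2$ rather than on $|\mathcal S|$. By Lemma~\ref{l94} the random variable $X=\sum_{I\in\mathcal S}c_I\prod_{i\in I}x_i$ satisfies $\Exp(X)=0$, $\Exp(X^2)=\sum_I c_I^2$ and $\Exp(X^4)\le 2^{6r}\Exp(X^2)^2$, so Corollary~\ref{c93} with $b=2^{6r}$ produces an assignment with $X\ge\sqrt{\sum_I c_I^2}\,/\,2^{3r+1}$. Hence if $\sum_{I}c_I^2\ge 2^{6r+2}k^2$ there is an assignment with $X\ge k$, which by Lemma~\ref{lem0} satisfies at least $((2^r-1)m+k)/2^r$ clauses, and we answer ``yes''. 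The point of thresholding on $\sum_I c_I^2$ (and not merely $|\mathcal S|$) is that in the complementary case $\sum_I c_I^2<2^{6r+2}k^2$ we simultaneously obtain $|\mathcal S|=O(k^2)$, $|c_I|=O(k)$ for every $I$, and at most $r|\mathcal S|=O(k^2)$ variables occurring in $X$.

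For the running time I would compute $X$ by expanding each clause into its $2^r=O(1)$ monomials and bucketing equal ones in $O(m)$ time; in the small case the value $X(\tau)$ --- and hence $\sat(\tau,F)$ by Lemma~\ref{lem0} --- depends only on the $O(k^2)$ variables occurring in $X$, so trying all $2^{O(k^2)}$ assignments to them and testing $X\ge k$ costs $2^{O(k^2)}$. This yields total time $O(m)+2^{O(k^2)}$ and settles fixed-parameter tractability.

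For the bikernel to $\MLT$ I would substitute $x_i=(-1)^{y_i}$, identifying $\{-1,1\}$ with $\mathbb F_2$, so that $\prod_{i\in I}x_i=(-1)^{\sum_{i\in I}y_i}$. Replacing each monomial $c_I\prod_{i\in I}x_i$ by the weighted equation $\sum_{i\in I}y_i=b_I$ over $\mathbb F_2$, where $b_I=0$ if $c_I>0$ and $b_I=1$ otherwise and the weight is $w_I=|c_I|$, one checks that $X=2\,(\text{satisfied weight})-W$ with $W=\sum_I|c_I|$; thus $X\ge k$ holds for some assignment iff the satisfied weight can reach $(W+k)/2$, which is exactly the $\MLT$ question with the same parameter $k$. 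In the large case we output a fixed trivial yes-instance. Since $\sum_I c_I^2=O(k^2)$ forces $|\mathcal S|=O(k^2)$ equations and variables and $W\le\sum_I c_I^2=O(k^2)$ with each $w_I=O(k)$, this is a bikernel of size $O(k^2)$, giving part~(i).

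For part~(ii), a polynomial-size kernel follows at once from Lemma~\ref{lem:pk}, since $\MLT$ is in NP and $\MST$ is NP-complete (for $r\ge2$, the query $\sat(F)\ge t$ reduces to it by setting $k=2^r t-(2^r-1)m$). To obtain the sharper size $O(k^2)$ I would instead convert the reduced polynomial directly back into clauses, and I expect the main obstacle to be realizing a single prescribed monomial by $r$-clauses. Using the orthogonality of the Walsh characters $\epsilon\mapsto\prod_{i\in K}\epsilon_i$, the multiset of all $r$-clauses on a variable set $J\supseteq I$ with $|J|=r$ whose sign pattern satisfies $\prod_{i\in I}\epsilon_i=1$ has polynomial exactly $-2^{r-1}\prod_{i\in I}x_i$ (all other monomials cancel, and there is no constant term); the opposite coset yields $+2^{r-1}\prod_{i\in I}x_i$. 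Taking $|c_I|$ copies of such a block, with the coset chosen to fix the sign, realizes $2^{r-1}c_I\prod_{i\in I}x_i$; summing over $I\in\mathcal S$ produces an $r$-CNF formula $F'$ with $X_{F'}=2^{r-1}X$, equivalent to $(F,k)$ with parameter $k'=2^{r-1}k$. As $\sum_I|c_I|=O(k^2)$ and each block has $2^{r-1}=O(1)$ clauses, $F'$ has $O(k^2)$ clauses and variables, completing the $O(k^2)$ kernel.
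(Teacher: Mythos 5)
Your proposal is correct and follows essentially the same route as the paper: the same probabilistic threshold via Corollary~\ref{c93} and Lemma~\ref{l94}, the same monomial-to-weighted-equation bikernel, and the same $2^{r-1}$-clauses-per-monomial direct kernel (your Walsh-character cosets are exactly the paper's sets of non-satisfying assignments of the corresponding linear equation, verified by orthogonality instead of by counting satisfied clauses). Your one genuine refinement is thresholding on $\sum_{I}c_I^2$ rather than on $|\mathcal{S}|$, which is in fact what is needed to fully justify the paper's claim that $\sum_{I}|c_I|\le\sum_{I}c_I^2=O(k^2)$ in the ``otherwise'' branch.
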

\vspace{-5pt}
\begin{proof}
  By Lemma~\ref{lem0} our problem is equivalent to that of deciding
  whether or not there is a truth assignment to the variables $x_1, x_2,
  \ldots, x_n$, so that
\begin{equation}
\label{e1}
X(x_1, \ldots ,x_n) \geq k.
\end{equation}
Note that in particular this implies that if
  $X$ is the zero polynomial, then any truth assignment satisfies
  exactly a $(1-2^{-r})$ fraction of the original clauses.  By
  Corollary~\ref{c93} and Lemma~\ref{l94}, $\Prob(X \geq
  \frac{\sqrt{\Exp(X^2)}}{2\sqrt{b}})>0,$ where $b=9^r$ and
  $\Exp(X^2)=\sum_{I\in {\cal S}}c^2_I \geq |{\cal S}|$; the last
  inequality follows from the fact that each $|c_I|$ is a positive
  integer. Therefore $\Prob(X \geq \frac{\sqrt{ |{\cal S}|}}{2\cdot
    3^r})>0$. Now, if $k\le \frac{\sqrt{|{\cal S}|}}{2\cdot 3^r}$ then
there are $x_i \in \{-1,1\}$ such that
  (\ref{e1}) holds,
and there is an assignment for which the answer to $\MST$
  is {\sc Yes}. Otherwise, $|{\cal S}| = O(k^2)$, and in fact even
$\sum_{I\in {\cal S}}|c_I| \leq \sum_{I\in {\cal S}}c^2_I =O(k^2)$,
that is, the total number of terms of the simplified polynomial,
even when counted with multiplicities, is at most $O(k^2)$.

  For any fixed $r$, the representation of a problem instance of $m$
  clauses as a polynomial, and the simplification of this polynomial,
  can be performed in time $O(m)$. If the number of nonzero terms of
  this polynomial is larger than $4 \cdot 3^{2r} k^2$, then the answer
  to the problem is {\sc Yes}. Otherwise, the polynomial has at most
  $O(k^2)$ terms and depends on at most $O(k^2)$ variables, and its
  maximum can be found in time $2^{O(k^2)}$.

This completes the proof of the first part of the theorem.
We next establish the second part. Given the simplified  polynomial
$X$ as above, define a problem in {\MLT} with the variables
$z_1, z_2, \ldots ,z_n$ as follows. For each
nonzero  term $c_I \prod_{i \in I} x_i$  consider the linear equation
$\sum_{i \in I} z_i =b$, where $b=0$ if $c_I$ is positive, and $b=1$
if $c_I$ is negative, and either associate this equation with the
weight $w_I=|c_I|$, or duplicate it $|c_I|$ times.  It is easy to check
that this system of equations has an assignment $z_i$ satisfying at least
$[\sum_{I \in {\cal S}} w_I+k]/2$ of the equations if and only if
there are $x_i \in \{-1,1\}$ so that $X(x_1,x_2, \ldots ,x_n)
\geq k$. This is shown by the transformation $x_i=(-1)^{z_i}$. See
also \cite{HastadVenkatesh02} and \cite{GutinKimSzeiderYeo09a} for
a similar discussion. Since, as explained above,  we may assume
that $\sum_{I \in {\cal S}} |c_I| =O(k^2)$ (as otherwise  we know
that the answer to our problem is  {\sc Yes}), this provides the
required bikernel of size $O(k^2)$ to {\MLT}.

It remains to prove the existence of a polynomial size kernel for
the original problem. One way to do that is to apply
Lemma \ref{lem:pk}. Indeed, {\MLT} is in NP, and
{\MST} is NP-complete,
implying the desired  result.

It is also possible to give a
direct proof, which shows that the problem admits a kernel
of size at most $O(k^2)$. To do so, we replace each linear equation
of at most $r$ variables
by a set of $2^{r-1}$ clauses, so that if the variables $z_i$
satisfy the equation, the corresponding Boolean variables $x_i=(-1)^{z_i}$  satisfy all
these clauses, and if the variables $z_i$ do not satisfy the
equation, then the variables $x_i$ above   satisfy only
$2^{r-1}-1$ of the clauses. This is done as follows.

Consider, first, a linear equation with exactly $r$ variables.
After
renumbering the variables, if needed,  a typical equation
is of the form  $z_1+z_2 + \cdots +z_r=b$, where the sum is over
$\mathbb{F}_2$ and $b \in \{0,1\}$. There are exactly $2^{r-1}$
Boolean assignments $\delta=(\delta_1,\delta_2, \ldots , \delta_r)$
for the variables $z_i$
that do {\bf not} satisfy  the equation. For each such assignment
$\delta$ let $C_{\delta}$ be the clause consisting of $r$ literals,
where the literal number $i$ is $x_i$ if $\delta_i=1$ and is
$\overline{x_i}$ if $\delta_i=0$. Note that if the variables
$z_1,z_2, \ldots ,z_r$ satisfy the  above equation, then
$(z_1,z_2,\ldots ,z_r)$ is not one of the vectors $\delta$
considered, and hence each of the clauses $C_{\delta}$
constructed contains at
least one satisfied literal when $x_i=(-1)^{z_i}$. Therefore, in this
case all clauses are
satisfied. A similar argument shows that if the variables $z_i$ do
not satisfy the equation, there will  be exactly one non-satisfied
clause, namely the one corresponding to the vector
$\delta=(z_1,z_2, \ldots ,z_r)$. 

The construction can be extended
to equations with less than $r$ variables. Indeed,
the only property
used in the transformation above is that there are exactly
$2^{r-1}$ Boolean assignments  for the variables  $z_1,z_2, \ldots
,z_r$ that do not satisfy the equation. If the equation
has only $(1 \leq)~ s<r$ variables, add to these variables an arbitrary set
of $r-s$ of the other variables, and consider the set of all
Boolean assignments to this augmented set of variables that do not
satisfy the equation. Here, too, there are exactly $2^{r-1}$ such
assignments  and we can thus repeat the construction above in this
case as well.

The above procedure transforms a set of $W$ linear equations over
$\mathbb{F}_2$
into a multiset of $2^{r-1} W$ clauses. Moreover, if some truth
assignment  does not satisfy exactly $\ell$ equations, then the same
assignment does not satisfy the same number, $\ell$, of clauses.
In particular, there is an assignment satisfying all equations
but $(W-k)/2$ of them if and only if  there is an assignment
satisfying all clauses but $(W-k)/2$ of them. This means that
among
the $m=2^{r-1}W$ clauses, the number of satisfied ones is
$m-(W-k)/2=[(2^r-1)m+2^{r-1}k]/2^r$. This reduces an instance of
{\MLT} with $W$ equations and parameter $k$ to an instance of {\MST}
with $2^{r-1}W$ clauses and parameter $2^{r-1}k$. Since $r$ is a
constant, this provides the required kernel of size $O(k^2)$,
completing the proof.
\iffalse
We start from (i).
Consider {\MLT} and let an equation $e_j$ be $\sum_{i\in I_j}z_i=b_j.$
Consider the following random variable first introduced
in~\cite{HastadVenkatesh02}:
$Y=\sum_{j=1}^m Y_j$, where
$Y_j=(-1)^{b_j} w_j \prod_{i \in I_j} y_i$ and all $y_i$ are independent
uniform random variables on $\{-1,1\}$. By \cite{GutinKimSzeiderYeo09a}, the answer to {\MLT} is {\sc Yes} if and only if $Y\ge 2k$
for some assignment of values to $y_1,\ldots ,y_n$.

Now observe that  the simplified $X$ given just after Lemma \ref{lem0}
can be viewed as $Y$, where $c_I=(-1)^{b_I} w_I$, $w_I=|c_I|$, and
$b_I=0$ if $c_I>0$ and $b_I=1$, otherwise. So, the simplified $X$ corresponds to {\MLT}.
Thus, in the first part of this proof we have obtained a polynomial-size bikernel from {\MST} to {\MLT}.

To prove (ii) observe that {\sc Max $r$-Lin2} is in NP.
Thus, by Lemma \ref{lem:pk}, (ii) follows from (i).
\fi
\end{proof}

Our algorithm for the problem $\MST$ can be easily modified to provide, efficiently, for any given
instance of $m$ clauses to which there is a truth assignment satisfying
at least $k/2^r$ clauses above the average, an assignment for the
variables with this property.  Indeed, the proof of Theorem~\ref{t95}
only requires that the variables $x_i$ are $4r$-wise independent, and
there are known constructions of polynomial size sample spaces
supporting such random variables (see, e.g., \cite{AS}, Chapter 16).
Thus, if in the polynomial $X$, $\sqrt{|{\cal S}|}/(2\cdot 3^r)\geq k$,
then one can find an assignment satisfying at least as many clauses as
needed by going over all points in such a sample space, and if $\sqrt{|{\cal S}|}/(2\cdot 3^r)< k$, one
can solve the problem by an exhaustive search.

%We mention in passing that the hidden constant in the number of variables $4\cdot 9^rk^2$ adds to the exponent in the running time of our algorithm.

\section{MAX-$2$-SAT}\label{sec4}

In this section we describe an alternative, more combinatorial, approach
to the problem $\MST$ for $r=2$. Although this approach is somewhat more
complicated than the one discussed in the previous section, it provides
an additional insight to this special case of the problem, and allows us
to obtain a kernel with a linear number of variables for \textsc{Max-$2$-Sat${}_{\text{tlb}}$}.

We start with a simple reduction rule that applies to any value of~$r$.

\subsection{The Semicomplete Data Reduction}

Several data reduction rules, or preprocessing methods in a broader sense, have been suggested and applied to {\sc Max-Sat} in the literature (cf. \cite{SZ05} and references therein). We are especially interested in reduction rules which simplify 2-CNF formulas. Most preprocessing methods for 2-CNF formulas are not applicable to our type of parameterization. For example, a data reduction rule known as {\em Resolution Rule}, replaces two clauses $xy$, $\ol{x}z$ by $yz$ if the literals $x$ or $\ol{x}$ does not appear in any other clauses. Resolution Rule is sound under the parameterization of {\sc Max-2-Sat}, where we ask if all except for at most $k$ clauses can be satisfied. However, this reduction rule is not applicable to our type of parameterization. On the other hand, there is a data reduction rule introduced in \cite{SZ05}, where it was called {\em Complementary Unit Rule}, which is applicable to our parameterization of  {\sc Max-2-Sat}. Below we use this reduction rule.

We say that a pair of distinct clauses $Y$ and $Z$ has a \emph{conflict}
if there is a literal $p\in Y$ such that $\ol{p}\in Z$. We say that an
$r$-CNF formula $F$ is \emph{semicomplete} if the number of clauses is
$m=2^r$ and every pair of distinct clauses of $F$ has a conflict.  A
semicomplete $r$-CNF formula is \emph{complete} if each clause is over
the same set of variables.  There are $r$-CNF formulas that are
semicomplete but not complete; consider for example $\{x y$, $x\ol{y}$,
$\ol{x}z$, $\ol{x} \ol{z}\}$. We have the following:

\begin{lemma}\label{lem:semic}
  Every truth assignment to a semicomplete $r$-CNF formula satisfies
  exactly $2^r-1$ clauses.
\end{lemma}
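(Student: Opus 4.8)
The plan is to show that every truth assignment to a semicomplete $r$-CNF formula falsifies \emph{exactly one} clause; since there are $m=2^r$ clauses, this immediately yields $2^r-1$ satisfied clauses. I would split the argument into an upper bound (at most one clause is falsified) and a matching lower bound (at least one clause is falsified).

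For the upper bound I would use only the conflict property. Fix a truth assignment $\tau$ and suppose two distinct clauses $Y$ and $Z$ were both falsified. Since $F$ is semicomplete, $Y$ and $Z$ have a conflict, so there is a literal $p\in Y$ with $\ol{p}\in Z$. Falsifying $Y$ forces $p$ to evaluate to false under $\tau$, while falsifying $Z$ forces $\ol{p}$ to evaluate to false, i.e.\ $p$ to evaluate to true---a contradiction. Hence $\tau$ falsifies at most one clause. Note that this step uses the pairwise conflicts but not the count $m=2^r$.

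For the lower bound I would bring in the count $m=2^r$ through a simple averaging argument, in the spirit of the probabilistic viewpoint of Section~\ref{sec3}. Choosing $\tau$ uniformly at random from $2^{\var(F)}$, each clause of size $r$ is falsified precisely when its $r$ literals all evaluate to false, an event of probability $2^{-r}$. Hence the expected number of falsified clauses is $m\cdot 2^{-r}=2^r\cdot 2^{-r}=1$. By the upper bound the number of falsified clauses is, for every $\tau$, an integer in $\{0,1\}$; a nonnegative integer-valued random variable bounded above by $1$ whose expectation equals $1$ must equal $1$ with probability $1$. Therefore every $\tau$ falsifies exactly one clause, and the two steps combine to give exactly $2^r-1$ satisfied clauses for every assignment.

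I expect the conflict-based upper bound to be the conceptual crux, with the counting being routine. The one delicate point is the interplay in the lower bound: averaging alone yields merely that \emph{some} assignment falsifies a clause, and it is the combination with the deterministic bound ``at most one'' that pins the value to exactly one for \emph{every} assignment. Equivalently, one can argue deterministically: summing over all $2^{\Card{\var(F)}}$ assignments, each clause is falsified by exactly $2^{\Card{\var(F)}-r}$ of them, for a total of $2^r\cdot 2^{\Card{\var(F)}-r}=2^{\Card{\var(F)}}$ falsification incidences; since each assignment contributes at most one incidence and the total equals the number of assignments, each assignment must contribute exactly one.
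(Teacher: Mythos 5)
Your proposal is correct and is essentially the paper's own argument: the paper likewise combines the pairwise-conflict observation (an assignment falsifies at most one clause) with the count that each clause is falsified by exactly $2^{n-r}$ assignments, so the $2^r\cdot 2^{n-r}=2^n$ falsification incidences force exactly one falsified clause per assignment. Your probabilistic phrasing of the lower bound is just a rescaling of that same count, which you also give in deterministic form at the end.
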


\begin{proof}
  Let $S$ be a semicomplete $r$-CNF formula.  To prove that no truth
  assignment satisfies all clauses of $S$ we use the following simple
  counting argument from~\cite{Iwama89}.  Observe that every clause is
  not satisfied by exactly $2^{n-r}$ truth assignments. However, each of
  these assignments satisfies each other clause (due to the
  conflicts). So, we have exactly $2^r \cdot 2^{n-r}$ truth assignments
  not satisfying $S$. But $2^r \cdot 2^{n-r}=2^n$, the total number of
  truth assignments.

  Now let $\tau$ be a truth assignment of $S$. By the above, $\tau$ does
  not satisfy a clause $C$ of $S$. However, $\tau$ satisfies any other
  clause of $S$ as any other clause has a conflict with~$C$.
\end{proof}

\noindent Consider the following data reduction procedure.

Given an $r$-CNF formula $F$ that contains a semicomplete subset
$F'\subseteq F$, delete $F'$ from $F$ and consider $F\setminus F'$
instead.  Let $F^S$ denote the formula obtained from $F$ by applying
this deletion process as long as possible.  We say that $F^S$ is
obtained from $F$ by \emph{semicomplete reduction}.

We state the following two simple observations as a lemma.

\begin{lemma}\label{lem:redsafe}
  Let $F$ be an $r$-CNF formula.
  \begin{enumerate}
  \item  $F^S$ can be obtained from $F$ in polynomial time.
  \item $\sat(F)-\sat(F^S)=(1-2^{-r})(\Card{F}-\Card{F^S})$.
  \end{enumerate}
\end{lemma}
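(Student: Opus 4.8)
The plan is to prove the two parts of Lemma~\ref{lem:redsafe} separately, establishing first that the semicomplete reduction terminates in polynomial time, and then that it changes $\sat$ by exactly the predicted amount.

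For part~1, I would first observe that a single step of the reduction deletes $2^r$ clauses from the current formula, so at most $\Card{F}/2^r$ deletion steps can occur before termination; hence the total number of steps is linear in $m=\Card{F}$. The only thing that needs justification is that each step can be carried out in polynomial time, i.e., that we can decide in polynomial time whether the current formula contains a semicomplete subset and, if so, exhibit one. Since $r$ is a fixed constant, a semicomplete subset consists of exactly $2^r$ clauses, so a brute-force search over all $\binom{\Card{F}}{2^r} = O(m^{2^r})$ candidate $2^r$-subsets, checking the pairwise-conflict condition for each (which takes $O((2^r)^2 r)=O(1)$ time per subset), runs in polynomial time. This suffices for part~1; the bound need not be efficient, only polynomial.

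For part~2, the strategy is to induct on the number of deletion steps, reducing everything to a single deletion step where $F^S$ is replaced by $F\setminus F'$ for one semicomplete subset $F'$ with $\Card{F'}=2^r$. The key claim is that $\sat(F)-\sat(F\setminus F') = 2^r-1$, which combined with $\Card{F}-\Card{F\setminus F'}=2^r$ gives the factor $(1-2^{-r})$. To prove this claim I would use Lemma~\ref{lem:semic}: for \emph{any} truth assignment $\tau$, we have $\sat(\tau,F)=\sat(\tau,F\setminus F')+\sat(\tau,F')=\sat(\tau,F\setminus F')+(2^r-1)$, since by Lemma~\ref{lem:semic} every assignment satisfies exactly $2^r-1$ clauses of the semicomplete formula $F'$. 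Because this additive relation holds with the \emph{same} constant $2^r-1$ for every $\tau$, taking the maximum over $\tau$ on both sides yields $\sat(F)=\sat(F\setminus F')+(2^r-1)$ directly. Iterating over all deletion steps accumulates the constant $2^r-1$ once per deleted block of $2^r$ clauses, and writing $t$ for the number of steps gives $\sat(F)-\sat(F^S)=t(2^r-1)=(1-2^{-r})\cdot t\cdot 2^r=(1-2^{-r})(\Card{F}-\Card{F^S})$.

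I do not expect a serious obstacle here; the argument is essentially bookkeeping on top of Lemma~\ref{lem:semic}. The one subtle point to get right is that the clause satisfaction count is \emph{additive} over the disjoint submultisets $F'$ and $F\setminus F'$ for a fixed $\tau$, and that Lemma~\ref{lem:semic} applies verbatim to $F'$ even though $F'$ may involve only a subset of the variables of $F$ (which is fine, since a semicomplete formula has a fixed satisfaction count independent of the ambient variable set). The cleanest way to avoid any confusion is to phrase the core step as the identity $\sat(\tau,F)=\sat(\tau,F\setminus F')+(2^r-1)$ holding uniformly in $\tau$, from which the maximization step is immediate; the only care needed is to confirm that the semicomplete subsets deleted across different steps are disjoint as submultisets of the original $F$, which holds because each deletion physically removes its clauses before the next step begins.
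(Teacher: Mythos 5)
Your proposal is correct. The paper actually gives no proof of this lemma at all --- it is stated as ``two simple observations'' --- so your argument supplies exactly the reasoning the authors leave implicit: part~1 by noting that a semicomplete subset has constant size $2^r$ (for fixed $r$), so brute-force search and the bound of at most $\Card{F}/2^r$ deletion steps give polynomial time; part~2 by applying Lemma~\ref{lem:semic} once per deleted block. Your emphasis on the key subtlety --- that $\sat(\tau,F)=\sat(\tau,F\setminus F')+(2^r-1)$ holds with the \emph{same} constant for every $\tau$, so that taking maxima on both sides is legitimate (which would fail if the offset depended on $\tau$) --- is precisely what makes the bookkeeping sound, and the disjointness of the deleted submultisets across steps is the right condition to check for the induction.
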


\subsection{Kernelization}

Let $F$ be a 2-CNF formula.  A variable $x\in \var(F)$ is
\emph{insignificant} if for each literal $y$ the numbers of occurrences
of the two clauses $xy$ and $\ol{x}y$ in $F$ are the same. A variable
$x\in \var(F)$ is \emph{significant} if it is not insignificant. A
literal is significant or insignificant if its underlying variable is
significant or insignificant, respectively.
\begin{theorem}\label{the:sigkernel}
  Let $F$ be a 2-CNF formula with $F=F^S$ (i.e., $F$ contains no
  semicomplete subsets) and let $k\geq 0$ be an integer. If $F$ has more
  than $3k-2$ significant variables, then $\sat(F)\geq (3\Card{F}+k)/4$.
\end{theorem}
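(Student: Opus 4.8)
The plan is to work with the algebraic representation from Section~\ref{sec3}. By Lemma~\ref{lem0} the claim $\sat(F)\ge(3\Card{F}+k)/4$ is equivalent to the existence of a $\pm1$ assignment with $X(x_1,\ldots,x_n)\ge k$, where for $r=2$ we have $X=\sum_x c_x x+\sum_{\{x,y\}}c_{xy}xy$ with integer coefficients; note in particular that $X$ is always integer-valued, so it suffices to produce an assignment with $X> k-1$. The first step is to pin down which variables matter. An insignificant variable cancels out completely and does not occur in $X$, so $\max X$ depends only on the variables that occur in $X$. The key observation in the other direction is that, because $F=F^S$, \emph{every} significant variable occurs in $X$: if a significant $x$ had $c_x=0$ and all $c_{xy}=0$, then writing $a_y=n_{\ol{x}y}-n_{xy}$ and $b_y=n_{\ol{x}\ol{y}}-n_{x\ol{y}}$ one gets $a_y=b_y$ for all $y$ and $\sum_y a_y=0$, while significance forces some $a_{y_0}\neq0$; picking $y_1$ with $a_{y_1}$ of the opposite sign exhibits the four pairwise-conflicting clauses $\ol{x}y_0,\ \ol{x}\ol{y_0},\ xy_1,\ x\ol{y_1}$, i.e.\ a semicomplete subset, contradicting $F=F^S$. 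Hence the number $s$ of significant variables equals the number of variables occurring in $X$, and it suffices to show $\max X\ge s/3$, which together with integrality yields $\max X\ge k$ whenever $s>3k-2$.

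Next I would pass to a \emph{signed weighted graph} $G$ encoding $X$: its vertices are the occurring variables, each quadratic term $c_{xy}xy$ gives an edge $\{x,y\}$ of weight $\Card{c_{xy}}\ge1$ and sign $\mathrm{sign}(c_{xy})$, and to absorb the linear part I add one auxiliary vertex $v_0$, fixed to $+1$, joined to every $x$ with $c_x\neq0$ by an edge of weight $\Card{c_x}$ and sign $\mathrm{sign}(c_x)$. Then $X$ equals the signed excess (satisfied weight minus unsatisfied weight) of $G$, where an edge is \emph{satisfied} in the usual signed-graph sense; by the global flip symmetry the constraint $x_{v_0}=+1$ does not lower the maximum, so I may maximise over all $\pm1$ assignments. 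By the previous paragraph every significant variable is now a non-isolated vertex of $G$.

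The heart is a probabilistic argument driven by matching theory. I would take a maximum matching $M$ of $G$; for each matched edge I \emph{couple} its endpoints so that the edge is satisfied, which leaves one free $\pm1$ variable per matched edge, and I choose these free variables independently and uniformly. Each unmatched vertex $u$ forms, together with the others, an independent set all of whose neighbours are matched, and I set $x_u$ to the sign maximising the total contribution $x_uB_u$ of its incident edges, where $B_u$ is the (integer) combination of the random matched-edge variables read off from $u$'s neighbourhood. Computing $\Exp[X]$, each matched edge contributes its full weight; every product of two \emph{independent} matched-edge variables, and every term touching a purely random variable, vanishes in expectation; and each unmatched vertex contributes $\Exp\Card{B_u}$, which by the first/fourth-moment estimate already used (Lemma~\ref{l91}/Corollary~\ref{c93}, in degree $1$) is at least $1/\sqrt3$ whenever $B_u\not\equiv0$. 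Thus matched vertices contribute at least $1/2$ each and unmatched vertices more than $1/3$ each, so $\Exp[X]\ge s/3$ and hence $\max X\ge s/3$, completing the argument after rounding.

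The hard part will be the degenerate unmatched vertices for which the incident weights cancel so that $B_u\equiv0$ and the naive contribution is $0$; controlling these is exactly where the finer signed-graph structure and a careful choice of the matching and coupling are needed, and I expect this to be the main obstacle. A second, more bookkeeping, difficulty is to keep the constant at precisely $1/3$: this factor is tight, since the \emph{frustrated triangle} (three variables whose three edge-signs multiply to $-1$) has $\max X=1$ on three significant variables, which both explains the bound $3k-2$ and shows it cannot be improved.
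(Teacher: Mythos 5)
Your setup closely parallels the paper's: your signed weighted graph is exactly the paper's auxiliary graph $G^0$ (with vertex weights folded into an extra vertex $v_0$), your claim that $F=F^S$ forces every significant variable to occur in $X$ is precisely Lemma~\ref{isol} with essentially the same four-clause semicomplete-subset argument, and your expectation computation is Lemma~\ref{lem:above}. The divergence is in the endgame, and that is where your proposal has a genuine gap --- one you flag yourself. Your inequality $\Exp[X]\ge s/3$ rests on every unmatched vertex contributing $\Exp|B_u|\ge 1/\sqrt3$, and this simply fails when $B_u\equiv 0$. Such degeneracy is not exotic: it occurs whenever an unmatched $u$ sees each matching edge it touches at \emph{both} endpoints with coefficients that cancel under your coupling (the frustrated triangle is the smallest instance, and there $\Exp[X]=s/3$ holds only with zero slack). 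With a badly chosen matching the bound is outright false: take a matched edge $v_1v_2$ and $t\ge 2$ unmatched vertices each joined to both $v_1$ and $v_2$ with cancelling weights; then your scheme yields $\Exp[X]=1$ while $s=t+2$. So the theorem does not follow from what you have written.

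The missing ingredient is to exploit \emph{maximality} of the matching: if $u_1\ne u_2$ are unmatched with $u_1\sim v_1$ and $u_2\sim v_2$ for a matching edge $v_1v_2$, then $u_1v_1v_2u_2$ is an augmenting path. Since a degenerate $u$ must be adjacent to both endpoints of every matching edge it touches, this gives an injection from degenerate unmatched vertices into $M$, hence at most $|M|$ of them; then $s\le 3|M|+(\text{nondegenerate})$ and $\Exp[X]\ge |M|+(\text{nondegenerate})/\sqrt3\ge s/3$ does close the argument. You would also need to tidy the role of $v_0$ (a fixed, non-random vertex that may or may not be matched) and check that a nonzero constant term in $B_u$ still gives $\Exp|B_u|\ge 1$. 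This repaired route is a legitimate alternative to the paper's, and arguably more direct: the paper instead handles the same difficulty by switching induced stars (Lemmas~\ref{inducedstar} and~\ref{collectionstar}) and invoking the Tutte--Berge formula to show that a no-instance has a small vertex set $S$ dominating all odd components, from which the $3k-2$ bound is read off. Both arguments use matching theory at exactly the point your proposal currently stops; until you supply the augmenting-path step (or an equivalent), the proof is incomplete.
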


The remainder of this section is devoted to the proof of
Theorem~\ref{the:sigkernel} and its corollary. Let $F$ be a 2-CNF formula with $m$ clauses and $n$ variables and let $k$ be an integer. We assume that $F$ contains no semicomplete subsets, i.e., $F=F^S$.

For a literal $x$ let $c(x)$ denote the number of clauses in $F$
containing $x$. Given a pair of literals $x$ and $y$, $x\neq \ol{y}$,
let $c(xy)$ be the number of occurrences of clause $xy$ in~$F$.

Given a clause $C\in F$ and a variable $x\in \var(F)$, let $\delta_C(x)$
be an indicator variable whose value is set as $\delta_C(x)=1$ if $x\in
C$, $\delta_C(x)=-1$ if $\ol{x}\in C$, and $\delta_C(x)=0$ otherwise.

\begin{lemma}\label{lem:above}
  For each subset $R=\{x_1,\ldots ,x_q\}\subseteq \var(F)$ we have
  $\sat(F)\geq (3m +k_R)/4$ for
  \[
  k_R= \sum_{1\leq i \leq q} (c(x_i)-c(\ol{x_i})) +
  \sum_{1\leq i < j \leq q}
  \big(
 c(x_i\ol{x_j})+c(\ol{x_i}x_j)
  -c(x_ix_j)-c(\ol{x_i}\ol{x_j})
  \big).
\]
\end{lemma}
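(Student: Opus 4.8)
The plan is to reduce the statement to the algebraic representation of Section~\ref{sec3} and then finish with a single expectation computation. First I would invoke Lemma~\ref{lem0} with $r=2$: for any truth assignment $\tau$ we have $X(\tau)=4(\sat(\tau,F)-\tfrac34 m)$, i.e. $\sat(\tau,F)=(3m+X(\tau))/4$. Hence it suffices to exhibit one assignment with $X\ge k_R$, since then $\sat(F)\ge(3m+k_R)/4$.

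Next I would write out $X$ explicitly for $r=2$. Each clause on variables $x_a,x_b$ contributes $1-(1+\epsilon_ax_a)(1+\epsilon_bx_b)=-\epsilon_ax_a-\epsilon_bx_b-\epsilon_a\epsilon_bx_ax_b$, so $X$ is a polynomial of degree $2$ with no constant term, of the form $X=\sum_i a_ix_i+\sum_{i<j}b_{ij}x_ix_j$. Summing the per-clause contributions and using the convention that $\epsilon_i=-1$ iff $x_i\in C$, I would verify that the linear coefficient is $a_i=c(x_i)-c(\ol{x_i})$ and the quadratic coefficient is $b_{ij}=c(x_i\ol{x_j})+c(\ol{x_i}x_j)-c(x_ix_j)-c(\ol{x_i}\ol{x_j})$. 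Comparing with the definition of $k_R$ then shows that $k_R=\sum_{i\in R}a_i+\sum_{i<j,\,i,j\in R}b_{ij}$ is exactly the sum of the coefficients of those monomials of $X$ all of whose variables lie in $R$.

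The probabilistic step is now immediate. I would set $x_i=1$ for every $x_i\in R$ and choose the remaining variables (those in $\var(F)\setminus R$) uniformly and independently at random in $\{-1,1\}$. Every monomial of $X$ that involves at least one variable outside $R$ then has expectation $0$, since it contains a factor $x_\ell$ with $\ell\notin R$, $\Exp(x_\ell)=0$, and the variables are independent; whereas each monomial supported inside $R$ contributes precisely its coefficient. Therefore $\Exp(X)=\sum_{i\in R}a_i+\sum_{i<j,\,i,j\in R}b_{ij}=k_R$, and consequently some choice of the random variables yields an assignment with $X\ge k_R$. Together with the first paragraph this gives $\sat(F)\ge(3m+k_R)/4$.

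The argument is short and uses nothing beyond the expansion of $X$ and linearity of expectation; in particular it does not use the hypothesis $F=F^S$, so the lemma in fact holds for arbitrary $2$-CNF formulas. The only place that demands care is the bookkeeping in the second paragraph: one must track the sign convention $\epsilon_i=-1\iff x_i\in C$ so that the four sign patterns $x_ix_j,\,x_i\ol{x_j},\,\ol{x_i}x_j,\,\ol{x_i}\ol{x_j}$ land on the correct terms of $b_{ij}$ (for instance $x_i\ol{x_j}$ gives $\epsilon_i\epsilon_j=-1$, hence $-\epsilon_i\epsilon_j=+1$, matching the $+c(x_i\ol{x_j})$ term). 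This is the main, though entirely routine, obstacle.
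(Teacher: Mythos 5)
Your proof is correct and is essentially the paper's own argument: both fix the variables of $R$ to true, randomize the remaining variables uniformly and independently, and extract an assignment from the expected value via linearity of expectation. The only cosmetic difference is that you route the computation through the polynomial $X$ of Lemma~\ref{lem0}, whereas the paper computes $\Prob(\text{$\tau$ satisfies $C$})=1-\frac14\prod_{i}(1-\delta_C(x_i))$ clause by clause; the coefficient bookkeeping and the conclusion (including your correct observation that $F=F^S$ is not needed) coincide.
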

\begin{proof}
  Take a random truth assignment $\tau\in 2^{\var(F)}$ such that
  $\tau(x_i)=1$ for all $i\in \{1,\dots,q\}$ and
  $\Prob(\tau(x)=1)=0.5$ for all $x\in \var(F)\setminus R$. A simple
  case analysis yields that the probability that a clause $C\in F$ is
  satisfied by $\tau$ is given by
  \[
  \Prob(\text{$\tau$ satisfies $C$})=1-\frac{1}{4}\prod_{1\leq i
    \leq q}(1-\delta_C(x_i)).
  \]
  \sloppypar Observe that for any clause $C$ and any three distinct
  variables $x,y,z$ we have $\delta_C(x) \delta_C(y)\delta_C(z)=0$ as
  $\var(C)$ contains exactly two variables.  Hence we can determine the
  expected number of clauses satisfied by $\tau$ as follows.
\begin{eqnarray*}
  \Exp(\sat(\tau,F))
  &=&
  \sum_{C\in F} \Prob[\;\text{$\tau$ satisfies $C$}\;]\\
  &=&
  \sum_{C\in F}\big\{1-\frac{1}{4}\prod_{1\leq i \leq q}(1-\delta_C(x_i))\big\} \\
  &=&
  \frac{3}{4}m +
  \frac{1}{4}
  \sum_{C\in F}\big\{
  \sum_{1\leq i \leq q} \delta_C(x_i)
  -\sum_{1\leq i < j \leq q}\delta_C(x_i)\delta_C(x_j)\big\} \\
  &=&
  \frac{3}{4}m +
  \frac{1}{4} \big\{
  \sum_{1\leq i \leq q} \sum_{C \in F} \delta_C(x_i)
  - \sum_{1\leq i < j \leq q} \sum_{C \in F}\delta_C(x_i)\delta_C(x_j)\big\}  \\
%   & =&
%   \frac{3}{4}m +
%   \frac{1}{4}\big\{
%   \sum_{1\leq i \leq q} (c(x_i)-c(\ol{x_i}))
%   - \!\!\!\sum_{1\leq i < j \leq q}\!\! (
%   \epsilon(x_ix_j)+\epsilon(\ol{x_i}\ol{x_j})
%   -\epsilon(x_i\ol{x_j})-\epsilon(\ol{x_i}x_j)\big\}.
  & =&
  \frac{3}{4}m +
  \frac{1}{4}k_R. \hspace{7cm}\qedhere
\end{eqnarray*}
\end{proof}

\medskip\noindent It is noteworthy that $\Prob(\text{$\tau$ satisfies $C$})=1-\frac{1}{4}\prod_{1\leq i \leq q}(1-\delta_C(x_i))$ in the proof of Lemma \ref{lem:above} is similar to a term of $X$ defined in Section \ref{algerep}. The term $1-\prod_{x_i\in \var(C)}(1+\epsilon_ix_i)$ of $X$ returns a fixed value on $C$ for a given ({\em fully determined}) truth assignment, depending on whether $C$ is satisfied or not. Similarly, the term $1-\frac{1}{4}\prod_{1\leq i \leq q}(1-\delta_C(x_i))$ returns a probability of $C$ being satisfied for a given ({\em partially determined}) random truth assignment. The benefit of having a probabilistic form of $X$ is that we now have a way to ignore a large number of variables, e.g., $V\setminus R$ in Lemma \ref{lem:above}, instead of searching for a fully determined truth assignment so as to compute $X$. For the case $r=2$, this probabilistic form of $X$ can be immediately interpreted in a graph-theoretical language as will be shown below.

Due to Lemma \ref{lem:above}, the task is now reduced into finding a subset $R$ of variables with $k_R\geq k$. These are variables which form the deterministic part of a partially random truth assignment. Using a notion of {\em switch} defined later, we replace $F$ by an equivalent formula in which every variable of $R$ is set to 1.
To find $R$ we use a graph-theoretical approach introducing an auxiliary weighted graph in which we seek an induced subgraph of weight at least $k$.
In particular, we note that an `independent' structure of an induced subgraph ensures its weight to be above a certain bound growing with the size of the induced subgraph. This means that if $(F,k)$ is a {\sc No}-instance, we do not have a large `independent' structure. Using  this fact and the Tutte-Berge formula for the size of a maximum matching in a graph (stated after Lemma \ref{collectionstar}), we will prove an upper bound on the number of vertices in the auxiliary weighted graph.

\medskip\noindent We construct an \emph{auxiliary graph} $G=(V,E)$ from
$F$ by letting $V=\var(F)$ and $xy\in E$ if and only if there exists a
clause $C\in F$ with $\var(C)=\{x,y\}$ (equivalently, $c(x \ol{y}) +
c(\ol{x}y) + c(xy) + c(\ol{x} \ol{y})\geq 1$).

We assign a \emph{weight} to each vertex $x$ and edge $xy$ of
$G=(V,E)$:
\begin{eqnarray*}
  w(x)&:=&\sum_{C \in F} \delta_C(x)=c(x)-c(\ol{x}), \\
  w(xy)&:=& -\sum_{C \in F}\delta_C(x)\delta_C(y)
  =c(x \ol{y}) + c(\ol{x}y)
  - c(xy) - c(\ol{x} \ol{y}).
\end{eqnarray*}
For subsets $U\subseteq V$ and $H\subseteq E$, let $w(U)=\sum_{x\in
  U}w(x)$ and $w(H)=\sum_{xy\in H}w(xy)$. The \emph{weight} $w(Q)$ of a
subgraph $Q=(U,H)$ is $w(U)+w(H)$. Let $G^0$ be the graph obtained from
$G$ by removing all edges of weight zero.
\begin{lemma}\label{isol}
  A variable $x\in \var(F)$ is insignificant if and only if $x$ is an
  isolated vertex in $G^0$ and $w(x)=0$.
\end{lemma}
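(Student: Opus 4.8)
The plan is to translate both sides of the equivalence into elementary statements about the clause-count function $c(\cdot)$ and then connect them through the definitions of the weights $w(x)$ and $w(xy)$. Fix $x\in\var(F)$ and, for each variable $z\neq x$, abbreviate the four relevant counts as $a=c(xz)$, $b=c(x\ol{z})$, $p=c(\ol{x}z)$, $q=c(\ol{x}\ol{z})$. In this notation, insignificance of $x$ says exactly that $a=p$ and $b=q$ for every $z$ (the two instances of the defining condition, taken for the literals $y=z$ and $y=\ol{z}$). On the other hand $w(xz)=b+p-a-q$, and, since every clause containing $x$ or $\ol{x}$ pairs it with a single literal of some $z\neq x$, we have $c(x)=\sum_{z\neq x}(a+b)$ and $c(\ol{x})=\sum_{z\neq x}(p+q)$, so $w(x)=\sum_{z\neq x}(a+b-p-q)$.

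For the forward direction I would just compute. If $x$ is insignificant then $a=p$ and $b=q$ for every $z$, whence each $w(xz)=b+p-a-q=0$; thus every edge incident to $x$ is deleted when forming $G^0$, so $x$ is isolated there, and summing gives $w(x)=\sum_{z\neq x}(a+b-p-q)=0$.

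The converse is the real content, and it is where the standing hypothesis $F=F^S$ must be used. Being isolated in $G^0$ forces $w(xz)=0$ for every $z\neq x$ (for a $z$ not adjacent to $x$ in $G$ all four counts vanish, so this is automatic), and $w(xz)=0$ is equivalent to $a-p=b-q$; denote this common value by $t_z$. The second hypothesis then reads $w(x)=\sum_{z\neq x}2t_z=0$, i.e.\ $\sum_{z\neq x}t_z=0$. The goal is to upgrade this to $t_z=0$ for every single $z$, which is precisely insignificance. I expect this to be the only genuine obstacle: the two identities above do not by themselves force each $t_z=0$, since one can balance a positive $t_{z_1}$ against a negative $t_{z_2}$.

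The key observation is that such a cancellation is exactly what $F=F^S$ rules out. Arguing by contradiction, suppose some $t_z\neq0$; since the $t_z$ sum to zero there are distinct variables $z_1,z_2$ with $t_{z_1}>0>t_{z_2}$. From $t_{z_1}=a-p=b-q>0$ (with the counts taken at $z_1$) we get $a\geq1$ and $b\geq1$, so both clauses $xz_1$ and $x\ol{z_1}$ occur in $F$; from $t_{z_2}<0$ we likewise get $p\geq1$ and $q\geq1$, so both $\ol{x}z_2$ and $\ol{x}\ol{z_2}$ occur. These four clauses are pairwise conflicting — the first two clash in $z_1$, the last two in $z_2$, and each clause containing $x$ conflicts with each clause containing $\ol{x}$ — so they constitute a semicomplete subset of $F$, contradicting $F=F^S$. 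Hence every $t_z=0$, giving $a=p$ and $b=q$ for all $z$, i.e.\ $x$ is insignificant, which completes the plan.
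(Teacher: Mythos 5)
Your proof is correct and takes essentially the same route as the paper's: the forward direction is the same direct computation, and the converse is the same contradiction argument --- using $w(xz)=0$ to couple the two imbalances at each neighbour, using $w(x)=0$ to find one neighbour with a positive and one with a negative imbalance, and assembling the four clauses $xz_1$, $x\ol{z_1}$, $\ol{x}z_2$, $\ol{x}\ol{z_2}$ into a semicomplete subset contradicting $F=F^S$. Your $t_z$ bookkeeping is just a slightly more symmetric packaging of the paper's case analysis.
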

\begin{proof}
  Suppose $x\in \var(F)$ is insignificant.  Choose an edge $xy\in E$
  (this is possible since by construction $G$ has no isolated vertices).
  Since $x$ is insignificant, $c(x \ol{y})=c(\ol{x} \ol{y})$ and $c(xy)
  = c(\ol{x}y)$ and thus $w(xy)=0$. Therefore the edge $xy$ does not
  appear in $G^0$ and $x$ is isolated in $G^0$. Observe that we have
  $c(x)=c(\ol{x})$, which implies $w(x)=0$.

  Suppose $x\in \var(F)$ is an isolated vertex of $G^0$ and
  $w(x)=0$. Since $G$ has no isolated vertices, we have $w(xy)=0$ for
  all $xy\in E$. In order to derive a contradiction, let us suppose $x$
  is a significant variable of $F$. Consequently there is (i)~either a
  clause $xy\in F$ such that $c(xy)>c(\ol{x}y)$, or (ii)~there is a
  clause $\ol{x}y\in F$ such that $c(\ol{x}y)>c(xy)$. We consider
  case~(i) only, case~(ii) can be treated analogously.  With $w(xy)= 0$,
  we have $c(x\ol{y})>c(\ol{x}\ol{y})$, and thus $x\ol{y}\in F$.

  Now the condition $w(x)=c(x)-c(\ol{x})=0$ implies the existence of an
  edge $xz\in E$ with $z\neq y$ such that for some $z'\in \{z,\ol{z}\}$
  we have $\ol{x}z'\in F$ and $c(\ol{x}z')>c(xz')$. Without loss of
  generality, assume that $z'=z$. Since $w(xz)=0$, we have
  $\ol{x}\ol{z}\in F$. However, the four clauses $xy$, $x\ol{y}$,
  $\ol{x}z$, $\ol{x}\ol{z}$~in $F$ form a semicomplete 2-CNF formula,
  which contradicts our assumption that $F=F^S$.  Hence $x$ is indeed an
  insignificant variable.
\end{proof}

For a set $X \subseteq \var(F)$ we let $F_X$ denote the 2-CNF formula
obtained from $F$ by replacing $x$ with $\ol{x}$ and $\ol{x}$ with $x$
for each $x\in X$. We say that $F_X$ is obtained from $F$ by
\emph{switching}~$X$.

The following lemma follows immediately from the definitions of switch
and weights.

\begin{lemma}\label{lem:switch}
  The auxiliary graph $G_X$ corresponding to $F_X$ can be obtained from
  $G=(V,E)$ by reversing the signs of the weights of all vertices in $X$
  and all edges between $X$ and $V\setminus X$.  Moreover,
  $\sat(F)=\sat(F_X)$.
\end{lemma}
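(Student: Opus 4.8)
The plan is to verify the two assertions separately, each directly from the definitions; the lemma is elementary, and the cleanest route exploits the multiplicative behaviour of the weights under the sign indicator $\delta_C$. First I would note that switching a set $X$ replaces every literal on a variable of $X$ by its complement while leaving $\var(C)$ unchanged for each clause $C$; hence switching is a bijection of the clause multiset, and the auxiliary graphs $G$ and $G_X$ share the same vertex set $V=\var(F)$ and edge set $E$, so only the weights can change. The key observation is that if $C'$ denotes the clause of $F_X$ obtained from $C\in F$, then $\delta_{C'}(x)=\epsilon_x\,\delta_C(x)$, where $\epsilon_x=-1$ if $x\in X$ and $\epsilon_x=+1$ otherwise: switching $x$ turns an occurrence of $x$ into one of $\ol{x}$ and conversely, flipping the sign of $\delta_C(x)$, while $\delta_C(x)$ is untouched when $x\notin X$. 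Substituting this into the two weight formulas then yields
\[
w_{F_X}(x)=\sum_{C'}\delta_{C'}(x)=\epsilon_x\sum_{C}\delta_C(x)=\epsilon_x\,w_F(x),\qquad
w_{F_X}(xy)=-\sum_{C'}\delta_{C'}(x)\,\delta_{C'}(y)=\epsilon_x\epsilon_y\,w_F(xy).
\]
Thus a vertex weight reverses exactly for $x\in X$, and an edge weight reverses exactly when $\epsilon_x\epsilon_y=-1$, i.e.\ when precisely one endpoint lies in $X$; this is the first claim.

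For the equality $\sat(F)=\sat(F_X)$ I would exhibit a satisfaction-preserving bijection on truth assignments. Let $\sigma_X\colon 2^{\var(F)}\to 2^{\var(F)}$ be the involution that negates $\tau(x)$ for each $x\in X$ and fixes the remaining values. Negating the value of a variable exactly compensates for complementing its literal, so every literal has the same truth value under $\tau$ as the corresponding switched literal has under $\sigma_X(\tau)$; hence $\tau$ satisfies $C\in F$ if and only if $\sigma_X(\tau)$ satisfies the switched clause of $F_X$, and therefore $\sat(\tau,F)=\sat(\sigma_X(\tau),F_X)$. Since $\sigma_X$ is a bijection, taking the maximum over all assignments on both sides yields $\sat(F)=\sat(F_X)$.

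This lemma presents no genuine obstacle; the only point requiring care is the sign bookkeeping, and the factorization $\delta_{C'}(x)=\epsilon_x\,\delta_C(x)$ disposes of it in one stroke, making transparent why an edge weight flips precisely when exactly one of its endpoints is switched rather than whenever some endpoint is switched.
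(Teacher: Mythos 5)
Your proof is correct and is simply the spelled-out version of what the paper deems immediate: the paper states this lemma without proof, asserting it ``follows immediately from the definitions of switch and weights,'' and your factorization $\delta_{C'}(x)=\epsilon_x\,\delta_C(x)$ together with the satisfaction-preserving involution on truth assignments is exactly the direct verification intended. No issues.
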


To distinguish between weights in $G$ and $G_X$, we use $w_X(.)$ for
weights of $G_X$. Similarly, we use $c_X(.)$ for~$F_X$.

It is sometimes convenient to stress that the set $X$ we are switching
induces a subgraph. We can \emph{switch an induced graph} $Q$ by
switching all the vertices of~$Q$. Observe that by switching an induced
graph $Q$, we reverse the signs of weights on all vertices of $Q$ and
all edges incident with exactly one vertex of $Q$, but the sign of each
edge within $Q$ remains unchanged. This property will play a major role
to show that a certain structure meets the condition of the following
lemma.

\begin{lemma}\label{lem:subgraph}
  If there exist a set $X\subset V(G^0)$ and an induced subgraph
  $Q=(U,H)$ of $G^0$ with $w_X(Q)\geq k$, then $\sat(F)\geq (3m+k)/4$.
\end{lemma}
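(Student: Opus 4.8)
The plan is to reduce the statement to Lemma~\ref{lem:above} applied to the switched formula $F_X$, using the switching invariance of $\sat$ recorded in Lemma~\ref{lem:switch}. The first step is to recognize that the quantity $k_R$ appearing in Lemma~\ref{lem:above} is nothing but the weight of the induced subgraph of $G$ on $R$. Matching terms against the weight definitions, $c(x_i)-c(\ol{x_i})=w(x_i)$ and $c(x_i\ol{x_j})+c(\ol{x_i}x_j)-c(x_ix_j)-c(\ol{x_i}\ol{x_j})=w(x_ix_j)$, so that $k_R=\sum_{x\in R}w(x)+\sum_{\{x,y\}\subseteq R}w(xy)$, where non-adjacent pairs (and, after passing to $G^0$, weight-zero pairs) contribute nothing. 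Thus Lemma~\ref{lem:above} can be restated as: for every $R\subseteq\var(F)$, $\sat(F)\geq(3m+w(G[R]))/4$, with $w(G[R])$ the weight of the subgraph of $G$ induced on $R$.

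The crucial observation is that Lemma~\ref{lem:above} holds verbatim for \emph{any} $2$-CNF formula, since its proof only uses that each clause has exactly two variables; in particular it applies to $F_X$, which again has $m$ clauses and two variables per clause. Applying the restated lemma to $F_X$ with $R=U$ yields $\sat(F_X)\geq(3m+w_X(G_X[U]))/4$, where $w_X(G_X[U])$ is the weight of the induced subgraph on $U$ computed with the switched weights of $G_X$. I would then identify $w_X(G_X[U])$ with the given quantity $w_X(Q)$: the subgraph $Q=(U,H)$ is induced in $G^0$, so $H$ omits exactly the weight-zero edges of $G$ lying inside $U$; since switching merely reverses signs, it maps weight-zero edges to weight-zero edges, so these omitted edges still contribute $0$ to $w_X$ and adding them back leaves the sum unchanged. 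Hence $w_X(Q)=w_X(G_X[U])$, and combining with the displayed inequality gives $\sat(F_X)\geq(3m+w_X(Q))/4\geq(3m+k)/4$ by hypothesis. Finally, Lemma~\ref{lem:switch} supplies $\sat(F)=\sat(F_X)$, completing the argument.

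I expect the only genuinely delicate point to be the bookkeeping between $G$, $G^0$, and their switched counterparts — specifically verifying that dropping weight-zero edges (the passage to $G^0$) and sign-reversal (switching) interact cleanly, so that the weight of the relevant induced subgraph is computed correctly under both operations. Once that is settled, the conclusion is a direct substitution into Lemma~\ref{lem:above} together with the invariance $\sat(F)=\sat(F_X)$, with no further estimation required.
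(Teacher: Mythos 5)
Your proposal is correct and follows essentially the same route as the paper: apply Lemma~\ref{lem:above} to the switched formula $F_X$ with $R=U$, identify $k_U$ with $w_X(Q)$ via the definitions of the vertex and edge weights, and conclude with $\sat(F)=\sat(F_X)$ from Lemma~\ref{lem:switch}. Your extra care about the weight-zero edges dropped in passing from $G$ to $G^0$ is a detail the paper leaves implicit, but it changes nothing in the argument.
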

\begin{proof}
  We consider $U=\{x_1,\dots,x_q\}$ as a subset of $\var(F_X)$. By
  Lemmas~\ref{lem:above} and~\ref{lem:switch}, $\sat(F)=\sat(F_X)\geq
  (3m +k_U)/4$, where
  \begin{eqnarray*}
 k_U \!\!&=&\!\!  \sum_{i=1}^q (c_X(x_i)-c_X(\ol{x_i})) +
     \!\! \sum_{1\leq i < j \leq q}(
     c_X(x_i \ol{x_j}) + c_X(\ol{x_i}x_j)
     - c_X(x_ix_j) - c_X(\ol{x_i} \ol{x_j}) )  \\
   &=&\!\! \sum_{i=1}^q w_X(x_i) + \sum_{1\leq i < j \leq q}w_X(x_ix_j)
   \quad = \quad w_X(Q)  \quad\geq\quad k.
\hspace{0.4cm}\qedhere
  \end{eqnarray*}
\end{proof}

To apply Lemma~\ref{lem:subgraph} in the proof of
Theorem~\ref{the:sigkernel}, we will focus on a special case of induced
subgraphs of $G^0$. For a set $U\subseteq V(G^0)$, let $G^0[U]$ denote
the subgraph of $G^0$ induced by $U$. We call $G^0[U]$ an {\em induced
  star} with {\em center} $x$ if $x$ is a vertex of $G^0$, $I$ is an
independent set in the subgraph of $G^0$ induced by the neighbors of $x$
and $U=\{x\}\cup I$. We are interested in the induced star due to the
following property.

\begin{lemma}\label{inducedstar}
  Let $x$ be the center of an induced star $Q=G^0[U]$ and let
  $I=U\setminus \{x\}$.  Then there is a set $X\subseteq U$ such that
  $w_X(Q)\ge |I|$.
\end{lemma}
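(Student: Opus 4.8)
The plan is to exploit the fact that, because $I$ is an independent set of neighbors of $x$ and $Q=G^0[U]$ is \emph{induced}, the graph $Q$ is literally a star: its edge set is exactly $\{xy \SM y\in I\}$, with no edges among the leaves. Consequently I only need to control $w_X(x)$, the leaf weights $w_X(y)$, and the star-edge weights $w_X(xy)$, while I am free to pick any $X\subseteq U$.

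First I would encode the switch by a sign vector. Following the discussion after Lemma~\ref{lem:switch}, set $s_v=-1$ if $v\in X$ and $s_v=+1$ otherwise; then $w_X(v)=s_v w(v)$ for a vertex $v$, and $w_X(xy)=s_x s_y\, w(xy)$ for a star edge, since switching reverses exactly the edges with a single endpoint in $X$. Grouping terms by leaf gives
\[
w_X(Q)=s_x w(x)+\sum_{y\in I} s_y\big(w(y)+s_x\, w(xy)\big).
\]
With $s_x$ held fixed, each leaf $y$ appears in a single independent summand $s_y\big(w(y)+s_x w(xy)\big)$, so choosing each $s_y$ to match the sign of its bracket yields, for the best $X$ compatible with that $s_x$, the value $w_X(Q)=s_x w(x)+\sum_{y\in I}\big|w(y)+s_x\, w(xy)\big|$.

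The crux is then a short averaging argument over the two choices $s_x=\pm1$. Writing $A_{+}$ and $A_{-}$ for the two resulting values and adding them, the $w(x)$ terms cancel and each leaf contributes $|w(y)+w(xy)|+|w(y)-w(xy)|$. Using the elementary identity $|a+b|+|a-b|=2\max(|a|,|b|)\ge 2|b|$ with $a=w(y)$ and $b=w(xy)$, together with the key fact that every edge of $G^0$ carries a nonzero \emph{integer} weight and hence $|w(xy)|\ge 1$, I obtain $A_{+}+A_{-}\ge 2|I|$. Therefore $\max(A_{+},A_{-})\ge |I|$, and the sign $s_x$ attaining this maximum, with its accompanying optimal leaf signs, specifies a set $X\subseteq U$ with $w_X(Q)\ge |I|$, as required (this is exactly the hypothesis feeding into Lemma~\ref{lem:subgraph}).

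The step I expect to be the main obstacle is not the averaging itself but verifying that the vertex weights cannot sabotage the bound: a priori $w(x)$ and the $w(y)$ could be large and negative, so one cannot merely switch every star edge positive and discard the vertices. The averaging over $s_x=\pm1$ is precisely what neutralizes the single shared term $w(x)$, while the per-leaf optimization absorbs each $w(y)$ into an absolute value; the independence of $I$ is what guarantees there are no leaf-leaf edges to entangle these per-leaf contributions. I would also double-check that the $X$ produced is genuinely a subset of $U$ (it is, by construction) so that Lemma~\ref{lem:subgraph} applies downstream.
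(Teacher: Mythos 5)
Your proof is correct, but it organizes the averaging differently from the paper. The paper first switches each leaf $y$ individually so that every star edge has positive weight (hence, being a nonzero integer, weight at least $1$), and then performs a single random switch of the \emph{entire} star with probability $1/2$: since switching all of $U$ at once preserves the sign of every edge inside $Q$ while negating every vertex weight, the expected total weight is $0+w(H)\ge |I|$, and some outcome achieves the expectation. You instead condition on the center's switch state $s_x$, solve the leaf signs exactly (each leaf contributes the decoupled summand $s_y\bigl(w(y)+s_x w(xy)\bigr)$, which you maximize to $\bigl|w(y)+s_x w(xy)\bigr|$), and only then average over $s_x=\pm1$, invoking $|a+b|+|a-b|=2\max(|a|,|b|)\ge 2|b|$ together with $|w(xy)|\ge 1$. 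Both proofs rest on the same two structural facts --- edges of $G^0$ carry nonzero integer weight, and $I$ is independent so the per-leaf contributions do not interact --- but they cancel the vertex weights by different means: the paper kills all of them in expectation, whereas you kill only $w(x)$ by averaging and absorb each $w(y)$ into an absolute value, which in fact yields the slightly stronger bound $\sum_{y\in I}\max(|w(y)|,|w(xy)|)$. Your version is fully deterministic and a bit sharper; the paper's is shorter and carries over verbatim to the multi-star setting of Lemma~\ref{collectionstar}, where the independent coin flips for the separate stars are exactly what make the cross-edges vanish in expectation --- if you extend your argument there, you will need an extra device to handle those cross-edges.
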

\begin{proof}
  Let $H$ be the set of edges of $Q$. We may assume that $w(xy)\geq 0$ for
  each $y \in I$ since otherwise we can switch $y$, and $w(xy$) is integral. By a \emph{random
    switch} of $Q$, we mean a switch of every vertex of $Q$ with probability $0.5$. Take
  a random switch $R$ of~$Q$. Then we have $\Exp(w_R(z))=0$ for all
  $z\in U$. Note that the sign of each edge in $H$ remains
  positive. Hence we have $\Exp(w_R(Q))= w(H) \geq |I|$ and thus there
  exists a set $X\subseteq U$ for which $w_X(Q)\geq |I|$.
\end{proof}

If we are given more than one induced star, a sequence of random
switches gives us a similar result.

\begin{lemma}\label{collectionstar}
  Let $Q_1=(U_1,H_1),\ldots ,Q_m=(U_m,H_m)$ be a collection of
  vertex-disjoint induced stars of $G^0$ with centers $x_1,\ldots ,x_m$,
  let $U=\bigcup_{i=1}^{m}U_i$, and let $Q=G^0[U]$.  Then there is a set
  $X\subseteq U$ such that $w_X(Q)\ge \sum_{i=1}^m |I_i|,$ where
  $I_i=U_i\setminus \{x_i\},$ $i=1,\ldots ,m$.
\end{lemma}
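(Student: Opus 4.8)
The plan is to generalize Lemma~\ref{inducedstar} from a single induced star to a vertex-disjoint collection of stars, and the key observation is that vertex-disjointness is exactly what makes the single-star argument compose cleanly. First I would reduce to the case where every star edge is positively weighted: for each $i$ and each leaf $y\in I_i$, if $w(x_iy)\le 0$ then switch $y$. Since the stars are vertex-disjoint, each leaf belongs to a unique star, so these switches do not interfere with one another, and after performing all of them we may assume $w(x_iy)>0$ for every center $x_i$ and every leaf $y\in I_i$. (Here $w$ denotes the weights in the already-switched graph, but I will simply rename and assume this holds from the start.)

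Next I would perform a \emph{random switch} of the whole set $U$, meaning that each vertex of $U$ is placed into the switching set independently with probability $0.5$; call the resulting random switch $R$. As in Lemma~\ref{inducedstar}, the sign-reversal rule (Lemma~\ref{lem:switch}, together with the ``switching an induced graph'' discussion) gives $\Exp(w_R(z))=0$ for every vertex $z\in U$, since each vertex weight is flipped or not with equal probability. The crucial point is the behavior of the edges. Every edge of $Q=G^0[U]$ that is an edge of some star $Q_i$ lies \emph{within} $U$, so under a switch by a subset $X\subseteq U$ its sign is reversed precisely when exactly one endpoint lies in $X$. I therefore split $H=E(Q)$ into the star edges $\bigcup_i H_i$ and the remaining edges of $G^0[U]$ (edges joining vertices of two different stars, or a center to a non-leaf neighbor inside $U$). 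For each star edge $x_iy$ with $y\in I_i$, both endpoints lie in the single star $Q_i$, and because the $R$-switch treats all vertices independently, $\Exp(w_R(x_iy))=w(x_iy)>0$ only if I have already fixed the relative switching of $x_i$ and $y$; to retain positivity I would instead couple the leaves to their center.

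The cleaner route, which I expect to be the main obstacle to state carefully, is to switch each star \emph{relative to its own center}: fix the centers $x_1,\dots,x_m$ unswitched and let each leaf $y\in I_i$ be switched independently with probability $0.5$. Then for every star edge $x_iy$ the center is fixed, so $w_R(x_iy)=\pm w(x_iy)$ with the sign determined by whether $y$ is switched; choosing to switch $y$ exactly when that keeps the edge positive is not random, so instead I keep the expectation argument: over the random leaf-switches, $\Exp(w_R(x_iy))=0$ for star edges as well, which is \emph{not} what I want. Thus the right formulation is the one used for a single star: switch so that every star edge stays positive (possible by the first paragraph), and then randomize over the centers. Concretely, perform a random switch of the set $\{x_1,\dots,x_m\}\cup(\text{leaves chosen to keep star edges positive})$ by independently switching each center with probability $0.5$; each vertex weight still has expectation $0$, each inter-star edge has expected contribution $0$, and each star edge $x_iy$ has its sign reversed exactly when $x_i$ is switched but $y$ is not (and vice versa), so pairing $y$'s switch with $x_i$'s keeps it at its positive value $w(x_iy)$. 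Summing, $\Exp(w_R(Q))\ge \sum_i w(H_i)\ge \sum_i|I_i|$, and hence some realization $X\subseteq U$ achieves $w_X(Q)\ge\sum_{i=1}^m|I_i|$, as required. The delicate bookkeeping is exactly this coupling between each leaf and its center so that every star edge is preserved while all other edge-weight contributions average to zero; vertex-disjointness guarantees the couplings for distinct stars are independent and do not collide.
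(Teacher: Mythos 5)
Your final formulation---normalize so that every star edge has positive weight, then independently with probability $0.5$ switch each star as a block (center coupled with its leaves), so that star edges are preserved while all vertex weights and inter-star edges have expectation zero---is exactly the paper's proof, which phrases the block switch as ``a sequence of switches of $Q_1,\ldots,Q_m$ each with probability $0.5$'' and invokes the observation that switching an induced subgraph leaves the signs of its internal edges unchanged. The detours through fully independent vertex switches and leaf-only switches are correctly recognized as dead ends, and the argument you settle on is correct and essentially identical to the paper's.
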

\begin{proof}
  As in the proof of Lemma~\ref{inducedstar}, we may assume that all the
  edges of $H_i$ have positive weights.  Let $H$ be the set of edges of
  $Q$.  By a \emph{random switch} of $Q$, we mean a sequence of switches
  of $Q_1,\ldots ,Q_m$ each with probability $0.5$. Take a random switch
  $R$ of~$Q$. Then we have $\Exp(w_R(x))=0$ for all $x\in U$. Moreover,
  for the subgraph $Q$ of $G^0_R$, it holds that $\Exp(w_R(xy))=0$ for
  all $xy\in H\setminus \bigcup_{i=1}^{m}H_i$ since each choice of
  $w_R(xy)\geq 0$ and $w_R(xy)\leq 0$ is equally likely. By linearity of
  expectation and Lemma~\ref{inducedstar}, we have $\Exp(w_R(Q))=
  w(\bigcup_{i=1}^{m}H_i) \geq \sum_{i=1}^m |I_i|$ and thus there exists
  a set $X\subseteq U$ for which $w_X(Q)\geq \sum_{i=1}^m |I_i|$.
\end{proof}

Note that we can derandomize the procedures suggested in the proofs of Lemma \ref{inducedstar} and \ref{collectionstar} using the standard technique of conditional expectation \cite{AS}.
%In fact, the derandomization the former is even simpler: compare $w(Q)$ and $w_U(Q)$, then choose the one with the larger weight.
%\begin{lemma}\label{indep}
%If there is an independent set $I$ in $G^0$ with $|I|\geq k$ and $w(x)\neq 0$ for all $x\in I$, then the answer to the instance $\mc{F}$ with $k$ is {\sc yes}.
%\end{lemma}
%\begin{proof}
%We may assume that $w(x)>0$ for every vertex $x\in I$ since otherwise we can switch the corresponding variable in $\mc{F}$. Let $V'=I$ and $E'=\emptyset$. Then $(V',E')$ is an induced subgraph of $G^0$ with $w(V')+w(E')\geq k$ and we can apply Lemma \ref{subgraph}.
%\end{proof}

We are now in the position to complete the proof of
Theorem~\ref{the:sigkernel}.

Suppose that $(F,k)$ is a no-instance, i.e., $\sat(F)< (3m+k)/4$.
Notice that a matching can be viewed as a collection of induced stars of
$G^0$ for which $|I_i|=1$. It follows by Lemmas~\ref{lem:subgraph} and
\ref{collectionstar} that $G^0$ has no matching of size $k$.  The
Tutte-Berge formula \cite{Berge58,BondyMurty08} states that the size of
a maximum matching in $G^0$ equals
\[
\min_{S\subseteq V(G^0)}\frac{1}{2}\{ |V(G^0)| + |S| - oc(G^0-S)\}
\]
where $oc(G^0-S)$ is the number of odd components (connected components
with an odd number of vertices) in $G^0-S$. Hence there is a set
$S\subseteq V(G^0)$ such that $|V(G^0)| + |S| - oc(G^0-S) < 2k$. It
follows that
\begin{equation}\label{eq1}
  |V(G^0)|\le oc(G^0-S)-|S|+2k-1.
\end{equation}

We will now classify odd components in $G^0-S$. One obvious type of odd
components is an isolated vertex in $G^0$ of weight zero, which
corresponds to an insignificant variable by Lemma~\ref{isol}. All the
other odd components can be categorized into one of the following two
types:

\begin{enumerate}
\item Let $Q_1,\ldots, Q_L$ be the odd components of $G^0-S$ such that
  for all $1\leq i \leq L$ we have $|Q_i|=1$ and $Q_i$ is a significant
  variable.

\item Let $Q'_1,\ldots, Q'_{L'}$ be the odd components of $G^0-S$ such
  that for all $1\leq i \leq L'$ we have $|Q'_i|>1$.
\end{enumerate}

We construct a collection of induced stars as follows. From each of
$Q'_1,\ldots, Q'_{L'}$ we choose an edge, which is an induced star with
$|I|=1$. Let us consider $Q_1,\ldots, Q_L$. Each vertex $Q_i$ is
adjacent to at least one vertex of $S$. Thus, we can partition
$Q_1,\ldots, Q_L$ into $|S|$ sets, some of them possibly empty, such
that each partite set forms an independent set in which every vertex is
adjacent to the corresponding vertex $x_i$ of $S$. Each partite set,
together with $x_i$, forms an induced star. Now observe that we have a
collection of induced stars and the total number of edges equals
$L+L'$. If $L+L' \ge k$, Lemma~\ref{collectionstar} implies that for
some set $X$ of vertices from the odd components $w_X(Q)\geq k$, which
is impossible by Lemma~\ref{lem:subgraph}. Hence $L+L'\leq k-1$.

Therefore, $oc(G^0-S)-n' = L + L' \leq k-1$,
where $n'$ is the number of insignificant variables. By (\ref{eq1}),
we have $|V(G^0)|-n'\le k-1-|S|+2k-1\le 3k-2$. It remains to observe that $|V(G^0)|-n'$ equals the number of
significant variables of~$F$. This completes the proof of
Theorem~\ref{the:sigkernel}.

\begin{corollary}\label{the:kern}
  The problem {\MTT} admits a (polynomial time) reduction to a problem
  kernel with at most $3k-1$ variables.
\end{corollary}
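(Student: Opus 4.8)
The plan is to turn Theorem~\ref{the:sigkernel} into a kernelization by introducing two answer-preserving, polynomial-time reduction rules and applying them until neither is applicable. Throughout, $r=2$, so the tight lower bound fraction is $1-2^{-r}=3/4$ and the target is $\sat(F)\ge(3m+k)/4$ with $m=\Card{F}$. For $k=0$ the instance is always a yes-instance (the tight bound is met by every assignment, as already noted), so I dispose of it trivially and assume $k\ge 1$. Rule~R1 is the \emph{semicomplete reduction} of Lemma~\ref{lem:redsafe}: replace $(F,k)$ by $(F^S,k)$. By part~(2) of that lemma (with $r=2$), $\sat(F)-\sat(F^S)=\tfrac34(\Card{F}-\Card{F^S})$, and a one-line computation then shows $\sat(F)\ge(3\Card{F}+k)/4$ if and only if $\sat(F^S)\ge(3\Card{F^S}+k)/4$; thus R1 preserves the answer with the same parameter $k$, and by part~(1) it is polynomial. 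After R1 we have $F=F^S$.

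Rule~R2 eliminates an insignificant variable $x$. By definition the clauses of $F$ containing $x$ split, for each literal $y$ over another variable, into $n_y$ copies of $xy$ and $n_y$ copies of $\ol{x}y$ (equal multiplicities, by insignificance). A direct check shows that for every assignment $\tau$ the number of these clauses satisfied equals $n_y$ if $\tau$ falsifies $y$ and $2n_y$ if $\tau$ satisfies $y$ --- in particular it is independent of $\tau(x)$. I would therefore replace, for each $y$ with $n_y>0$, the pair of clause-classes $\{xy,\ol{x}y\}$ by $n_y$ copies of $yz$ together with $n_y$ copies of $y\ol{z}$, where $z$ is any variable of $F$ other than $x$ and $\var(y)$ (such a $z$ exists once $F$ has at least three variables; otherwise $F$ already has at most $2\le 3k-1$ variables and nothing need be done). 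These gadget clauses reproduce the same per-$y$ satisfaction counts, so $\sat(\tau,\cdot)$ is unchanged for every $\tau$, the total number of clauses is preserved, no new variable is introduced, and $x$ disappears. Hence R2 preserves both $m$ and $k$ and removes exactly one variable. Since R2 may create new semicomplete subsets and may alter the significance of the remaining variables, I would re-apply R1 and R2 alternately; the potential $2\Card{F}+\Card{\var(F)}$ strictly decreases at every application, so the loop terminates in polynomially many polynomial-time steps.

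The main obstacle is the correctness of R2: one must verify that the gadget replacement preserves $\sat(\tau,\cdot)$ for \emph{all} $\tau$ (the computation sketched above) and that its interaction with R1 and with the shifting significance of the other variables does no harm --- here Lemma~\ref{isol}, which identifies insignificant variables with isolated zero-weight vertices of $G^0$, keeps the bookkeeping under control. Once the loop halts we have $F=F^S$ and every variable of $F$ is significant, and I invoke Theorem~\ref{the:sigkernel}: if $F$ has more than $3k-2$ variables then $\sat(F)\ge(3m+k)/4$ and I output a fixed trivial yes-instance; otherwise $F$ has at most $3k-2\le 3k-1$ variables and $(F,k)$ itself is the kernel. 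Since every step is polynomial and answer-preserving, this gives the desired reduction of {\MTT} to a kernel with at most $3k-1$ variables.
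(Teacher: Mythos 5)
Your overall strategy is the paper's: semicomplete reduction, elimination of insignificant variables, and then Theorem~\ref{the:sigkernel} to conclude that an irreducible instance with more than $3k-2$ significant variables is a yes-instance. Your rule R2 differs in detail from the paper's: the paper replaces \emph{all} insignificant variables simultaneously by a single \emph{fresh} variable $z$, which is legitimate in one shot because, when $F=F^S$, no clause can contain two insignificant variables (otherwise all four clauses over that pair would occur with equal positive multiplicity and form a complete subset). You instead fold the insignificant variables one at a time onto an existing variable and iterate R1 and R2; your per-rule correctness check and termination potential are sound, but the construction is more complicated than necessary and forces you to track the shifting significance that the paper's one-shot replacement avoids.

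The genuine gap is that you never bound the number of \emph{clauses} of the instance you output. A kernel, by the definition used in the paper, must have total size $|x'|\le g(k)$; an instance on at most $3k-1$ variables is still a \emph{multiset} of clauses whose cardinality (and hence encoding length) can be arbitrarily large compared to $k$, so ``$(F,k)$ itself'' at the end of your argument is not yet a problem kernel. The paper closes this with one extra observation: an instance on $O(k)$ variables and $p$ clauses can be solved exactly in time $O(p\cdot 8^{k})$, so if $p>8^{k}$ this is $O(p^{2})$, i.e.\ polynomial, and one can decide the instance outright and output a trivial equivalent instance; otherwise $p\le 8^{k}$ and the size is bounded by a function of $k$. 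Without some such step (or an argument capping clause multiplicities), your construction bounds only the number of variables, not the size of the kernel. A very minor additional slip: for $k=0$ the instance is a yes-instance because \emph{some} assignment meets the tight bound $3m/4$ (by the expectation argument), not because every assignment does.
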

\begin{proof}
  Consider an instance $(F,k)$ of the problem.  First we apply the
  semicomplete reduction and obtain (in polynomial time) an instance
  $(F',k)$ with $F'=F^S$. We determine (again in polynomial time) the
  set $S'$ of significant variables of $F'$. If $\Card{S'}> 3k-2$ then
  $(F',k)$ is a yes-instance by Theorem~\ref{the:sigkernel}, and
  consequently $(F,k)$ is a yes-instance by Lemma~\ref{lem:redsafe}.
  Assume now that $\Card{S'}\leq 3k-2$.

  Let $z$ be a new variable not occurring in $F$. Since $F'=F^S$, no
  clause contains two insignificant variables and, thus, each
  insignificant variable can be replaced by $z$ without changing the
  solution to $(F',k)$. Let us denote the modified $F'$ by $F''$; $F''$
  has at most $3k-1$ variables.

  Let $p$ be the number of clauses in $F''$. Observe that we can find a
  truth assignment satisfying the maximum number of clauses of $F''$ in
  time $O(p8^k)$. Thus, if $p>8^k$, we can find the optimal truth
  assignment in the polynomial time $O(p^2)=O(m^2)$. Thus, we may assume
  that $F''$ has at most $8^k$ clauses. Therefore, $F''$ is a kernel of
  the {\MTT} problem.
\end{proof}

\section{Extension to Boolean Constraint Satisfaction Problems}\label{sec5}

The fixed-parameter tractability result on {\MST}
can be easily extended to any family of Boolean $r$-Constraint
Satisfaction Problems.  Here is an outline of the argument.

Let $r$ be a fixed positive integer, let
$\Phi$ be a set of Boolean functions, each involving at
most $r$ variables, and let
${\cal F}=\{f_1, f_2,
\ldots, f_m\}$ be a collection of Boolean functions, each being a
member of $\Phi$, and each acting on some subset of
the $n$ Boolean variables $x_1,x_2, \ldots ,x_n$.
The Boolean Max-$r$-Constraint
Satisfaction Problem (corresponding to $\Phi$), which we denote by
the {\sc Max-$r$-CSP} problem, for short, when $\Phi$ is clear from
the context, is the
problem of finding a truth assignment to the variables so as to maximize
the total number of functions satisfied. Note that this includes, as a
special case, the \textsc{Max-$r$-Sat} problem considered in the
previous section, as well as many related problems.  As most interesting
problems of this type are NP-hard, we consider their parameterized
version, where the parameter is, as before, the number of functions
satisfied minus the expected value of this number.  Note, in
passing, that the above expected value is a tight lower
bound for the problem, whenever the family ${\Phi}$ is closed under
replacing each variable by its complement, since if we apply any
Boolean function to all $2^r$ choices of literals whose underlying
variables are any fixed set of $r$ variables, then any truth
assignment  to the variables satisfies exactly the same number of
these $2^r$ functions.

For each Boolean
function $f$ of $r(f)$ Boolean variables
\[
x_{i_1}, x_{i_2}, \ldots , x_{i_{r(f)}},
\]
define a random variable $X_f$ as follows. As in the discussion of the
\textsc{Max-$r$-Sat} problem, suppose each variable $x_{i_j}$ attains
values in $\{-1,1\}$.  Let $V \subseteq \{-1,1\}^{r(f)}$ denote the set of all
satisfying assignments of $f$. Then
\[
X_f(x_1, x_2, \ldots ,x_n) = \sum_{v=(v_1, \ldots ,v_{r(f)}) \in V}
2^{r-r(f)}[\prod_{j=1}^{r(f)} (1+x_{i_j} v_j) - 1].
\]

This is a random variable defined over the space $\{-1,1\}^n$ and its
value at $x=(x_1,x_2, \ldots ,x_n)$ is $2^r-|V|\cdot 2^{r-r(f)}$ if $x$ satisfies $f$,
and is $-|V|\cdot 2^{r-r(f)}$ otherwise.  Thus, the expectation of $X_f$ is zero. Define
now $X =\sum_{f \in {\cal F}} X_f$.  Then the value of $X$ at $x=(x_1,
x_2, \ldots ,x_n)$ is precisely $2^r(s-a)$, where $s$ is the number of
the functions satisfied by the truth assignment $x$, and $a$ is the
average value of the number of satisfied functions.  Our objective is to
decide if $X$ attains a value of at least $k$. As this is a polynomial
of degree at most $r$ with integer coefficients and expectation zero, we
can repeat the arguments of Section~\ref{sec3} and prove that, for every
fixed $r$, the problem is fixed-parameter tractable.
Moreover, our previous arguments show that the problem admits
a polynomial-size bikernel reducing it to  an instance of {\MLT} of
size $O(k^2)$, and if the specific $r$-CSP problem considered is
NP-complete, then there is a polynomial
size kernel. This is the case for most interesting choices of
the family ${\Phi}$.

\section{Discussions}\label{sec:d}

Our results are mainly of theoretical interest. However, our kenelization for $\MTT$ might be of some practical interest for families of istances of \textsc{Max-$2$-Sat} where the maximum number of satisfied clauses is close to $3m/4.$

Recently Crowston et al. \cite{CGJKR} proved that $\MTT$ has a kernel with $O(k \log k)$ variables for every fixed $r\ge 2$. The new result uses several ideas given in this paper, but employes linear algebraic rather than probabilistic tools.

\medskip
\paragraph{Acknowledgments}
Research of Alon was partially supported by an ERC Advanced grant.
Research of Gutin, Kim and Yeo was supported in part by an EPSRC
grant. Research of Gutin was also supported
in part by the IST Programme of the European Community, under the
PASCAL 2 Network of Excellence.

\urlstyle{rm}


\begin{thebibliography}{10}

\bibitem{AlonGutinKrivelevich04} N. Alon, G. Gutin and M.
Krivelevich, Algorithms with large domination ratio, J. Algorithms,
50:118--131, 2004.

\bibitem{AS} N. Alon and J.H. Spencer, {\em The
Probabilistic Method}, 3rd edition, Wiley, New York, 2008.

\bibitem{AspvallPlassTarjan79}
B.~Aspvall, M.~F. Plass, and R.~E. Tarjan.
\newblock A linear-time algorithm for testing the truth of certain quantified
  {B}oolean formulas.
\newblock Inform. Proc. Lett., 8(3):121--123, 1979.

\bibitem{Berge58}
C.~Berge.
\newblock Sur le couplage maximum d'un graphe.
\newblock C. R. Acad. Sci. Paris, 247:258--259, 1958.

\bibitem{Bodlaender09} H.L. Bodlaender, Kernelization: New Upper and Lower Bound Techniques,
{\em Proc. IWPEC 2009}, Lect. Notes Comput. Sci., 5917:17--37, 2009.

\bibitem{BDFH} H.L. Bodlaender, R.G. Downey, M.R. Fellows, and D. Hermelin. On problems without
polynomial kernels (extended abstract). J. Comput. Syst. Sci., 75(8):423--434, 2009

\bibitem{BTY09} H.L. Bodlaender, S. Thomass\'e and A. Yeo.  Kernel Bounds for Disjoint Cycles and Disjoint Paths.
 {\em Proc. ESA 2009}, Lect. Notes Comput. Sci.,  5757:635--646, 2009.

\bibitem{Bo70} A. Bonami. {\'E}tude des coefficients de Fourier des fonctions de $L^p(G).$ Ann. Inst. Fourier,
20(2):335--402, 1970.


\bibitem{BondyMurty08}
J.~A. Bondy and U.~S.~R. Murty.
\newblock {\em Graph Theory}, Springer Verlag, New York, 2008.

\bibitem{CGJKR} R. Crowston, G. Gutin, M. Jones, E.J. Kim, and I.Z. Ruzsa, Systems of Linear Equations over $\mathbb{F}_2$ and
Problems Parameterized above Average. Proc. SWAT 2010, Lect. Notes
Comput. Sci. 6139 (2010),  164--175.

\bibitem{DowneyFellows95} R.G. Downey and M.R. Fellows. Parameterized computational feasibility. In
P. Clote and J.B. Remmel, editors, Proceedings of Feasible Mathematics II,
pages 219ï¿½244. Birkh{\"a}user, 1995.

\bibitem{DowneyFellows99}
R.~G. Downey and M.~R. Fellows.
{\em Parameterized Complexity}.
 Springer Verlag, 1999.

\bibitem{FlumGrohe06}
J.~Flum and M.~Grohe.
\newblock {\em Parameterized Complexity Theory},
\newblock Springer, Berlin, 2006.

\bibitem{GareyJohnsonStockmeyer76}
M.~R. Garey, D.~S. Johnson, and L.~Stockmeyer.
\newblock Some simplified {NP}-complete graph problems.
\newblock Theoret. Comput. Sci., 1(3):237--267, 1976.

\bibitem{GN} J. Guo and R. Niedermeier. Invitation to data reduction and problem kernelization.
ACM SIGACT News, 38:31--45, 2007.

%\bibitem{GrammNiedermeier00}
%J.~Gramm and R.~Niedermeier.
%\newblock Faster exact solutions for {MAX2SAT}.
%\newblock In {\em Algorithms and Complexity (CIAC 2000)},
%volume 1767 of {\em
%Lecture Notes in Computer Science},
%pages 174--186. Springer Verlag, 2000.



\bibitem{GutinKimMnichYeo} G.~Gutin, E.~J. Kim, M.~Mnich, and A.~Yeo.
Betweenness parameterized above tight lower bound.
J. Comput. Syst. Sci., in press.


\bibitem{GutinKimSzeiderYeo09a} G.~Gutin, E.~J. Kim, S.~Szeider, and
  A.~Yeo.  \newblock A probabilistic approach to problems parameterized
  above tight lower bound. J. Comput. Syst. Sci., in press.



\bibitem{GutinRafieySzeiderYeo07}
G.~Gutin, A.~Rafiey, S.~Szeider, and A.~Yeo.
\newblock The linear arrangement problem parameterized above guaranteed value.
\newblock Theory Comput. Syst., 41:521--538, 2007.

\bibitem{GutinSzeiderYeo08}
G.~Gutin, S.~Szeider, and A.~Yeo.
\newblock Fixed-parameter complexity of minimum profile problems.
\newblock Algorithmica, 52(2):133--152, 2008.

\bibitem{Hastad01}
J.~H{\aa}stad.
\newblock Some optimal inapproximability results.
\newblock J. ACM, 48(4):798--859, 2001.

\bibitem{HastadVenkatesh02}
J.~H{\aa}stad and S.~Venkatesh.
\newblock On the advantage over a random assignment.
\newblock Random Structures Algorithms 25(2):117--149, 2004.



\bibitem{Iwama89}
K.~Iwama.
C{NF}-satisfiability test by counting and polynomial average time.
SIAM J. Comput., 18(2):385--391, 1989.

\bibitem{Johnson73}
D.~S. Johnson.
\newblock Approximation algorithms for combinatorial problems.
\newblock In {\em {P}roceedings of the 5th {A}nnual {ACM}
  {S}ymposium on {T}heory of {C}omputing}, pages 38--49, 1973.


%\bibitem{KojevnikovKulikov09}
%A.~Kojevnikov and A.~S. Kulikov.
%\newblock A new approach to proving upper bounds for {MAX}-2-{SAT}.
%\newblock In {\em Proceedings of the {S}eventeenth {A}nnual {ACM}-{SIAM}
%  {S}ymposium on {D}iscrete {A}lgorithms},
%pages 11--17. Assoc. Comput. Mach.,
%  New York, 2006.

\bibitem{MahajanRaman99}
M.~Mahajan and V.~Raman.
\newblock Parameterizing above guaranteed values: {M}ax{S}at and {M}ax{C}ut.
\newblock J. Algorithms, 31(2):335--354, 1999.

\bibitem{MahajanRamanSikdar09}
M.~Mahajan, V.~Raman, and S.~Sikdar.
\newblock Parameterizing above or below guaranteed values.
\newblock J. Comput. Syst. Sci., 75(2):137--153, 2009.

\bibitem{Niedermeier06}
R.~Niedermeier.
\newblock {\em Invitation to Fixed-Parameter Algorithms}.
\newblock Oxford Lecture Series in Mathematics and its Applications. Oxford
  University Press, 2006.

\bibitem{ODonnell08}
R.~O'Donnell.
Some topics in analysis of boolean functions.
{\em Proc. STOC 2008}, 569--578, 2008.


\bibitem{RaibleFernau08}
D.~Raible and H.~Fernau.
\newblock A new upper bound for {Max-2-SAT}: A graph-theoretic approach.
\newblock {\em Proc. MFCS 2008}, Lect. Notes
Comput. Sci, 5162:551--562, 2008.


\bibitem{RazgonOsullivan08}
I.~Razgon and B.~O'Sullivan. Almost 2-{SAT} is fixed-parameter tractable.  J. Comput. Syst. Sci., 75(8):435--450, 2009.

\bibitem{SZ05} H. Shen and H. Zhang. Improving exact algorithms for MAX-2-SAT. Annals Math. Artif. Intell., 44(4):419--436, 2005.

  \bibitem{HeggernesPaulTelleVillanger07}
Y.~Villanger, P.~Heggernes, C.~Paul, and J.~A. Telle.
Interval completion with few edges.
SIAM J. Comput., 38(5):2007--2020, 2009.

\bibitem{Williams04}
R.~Williams. A new algorithm for optimal constraint satisfaction and its
  implications. Theor. Comput. Sci., 348:357--365, 2005.

\bibitem{Woeginger03}
G.~J. Woeginger.
\newblock Exact algorithms for {NP}-hard problems: A survey.
\newblock In {\em Combinatorial Optimization - Eureka, You Shrink!}, Lect. Notes
Comput. Sci., 2570:185--208, 2003.





\end{thebibliography}
\end{document}